\newtheorem{theorem}{Theorem}
\newtheorem{lemma}[theorem]{Lemma}
\newtheorem{remark}[theorem]{Remark}
\theoremstyle{definition}
\newtheorem{definition}[theorem]{Definition}
\newtheorem{proposition}[theorem]{Proposition}
\begin{document}                                                 
\title[Multiphase Flow System with Surface Tension and Flow]{Thermodynamical Modeling of Multiphase Flow System with Surface Tension and Flow}                                 
\author[Hajime Koba]{Hajime Koba}                                
\address{Graduate School of Engineering Science, Osaka University,\\
1-3 Machikaneyamacho, Toyonaka, Osaka, 560-8531, Japan}                                  
\email{iti@sigmath.es.osaka-u.ac.jp}

\thanks{This work was partly supported by the Japan Society for the Promotion of Science (JSPS) KAKENHI Grant Number JP21K03326.}                                     
\keywords{Multiphase flow, Surface tension, Surface flow, Mathematical modeling, First law of thermodynamics, Energetic variational approach}                            
\subjclass[]{80M30, 35Q79, 76-10, 80-10, 35A15}                                
\begin{abstract}
We consider the governing equations for the motion of the viscous fluids in two moving domains and an evolving surface from both energetic and thermodynamic points of view. We make mathematical models for multiphase flow with surface flow by our energetic variational and thermodynamic approaches. More precisely, we apply our energy densities, the first law of thermodynamics, and the law of conservation of total energy to derive our multiphase flow system with surface tension and flow. We study the conservative forms and conservation laws of our system by using the surface transport theorem and integration by parts. Moreover, we investigate the enthalpy, the entropy, the Helmholtz free energy, and the Gibbs free energy of our model by applying the thermodynamic identity. The key idea of deriving surface tension and viscosities is to make use of both the first law of thermodynamics and our energy densities.
\end{abstract}       
\maketitle

\section{Introduction}\label{sect1}

\begin{figure}[htbp]
\includegraphics[width=12cm]{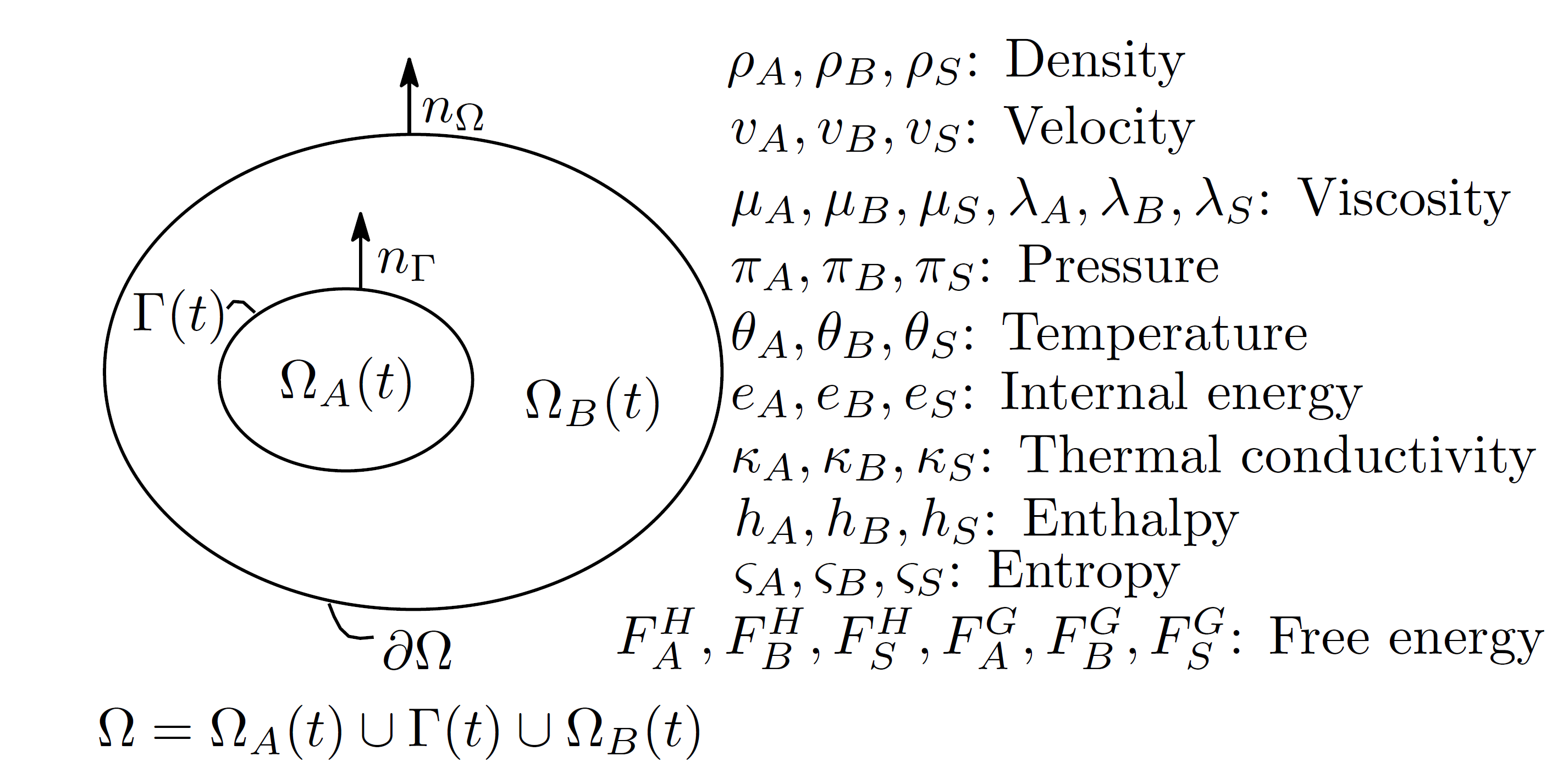}
\caption{Moving Domains, Surfaces and Notations}
\label{Fig1}
\end{figure}

We are interested in a mathematical modeling of a soap bubble floating in the air. When we focus on a soap bubble, we can see the fluid flow in the bubble. We call the fluid flow in the bubble a \emph{surface flow}. We can consider a surface flow as a fluid-flow on an evolving surface. To make a mathematical model for a soap bubble floating in the air, we have to study the dependencies among fluid-flows in two moving domains and surface flow. We consider the governing equations for the motion of the viscous fluids in the two moving domains and surface from both energetic and thermodynamic points of view. More precisely, we apply the first law of thermodynamics and our energy densities to derive our multiphase flow system with surface tension and flow.

Let us first introduce fundamental notations. Let $t \geq 0$ be the time variable, and $x( = { }^t (x_1 , x_2, x_3 ) ) \in \mathbb{R}^3$ the spatial variable. Fix $T >0$. Let $\Omega \subset \mathbb{R}^3$ be a bounded domain with a smooth boundary $\partial \Omega$. The symbol $n_\Omega = n_\Omega (x) = { }^t (n^\Omega_1 , n^\Omega_2 , n^\Omega_3 )$ denotes the unit outer normal vector at $x \in \partial \Omega$. Let $\Omega_A (t) (= \{ \Omega_A (t) \}_{0 \leq t < T} )$ be a bounded domain in $\mathbb{R}^3$ with a moving boundary $\Gamma (t)$. Assume that $\Gamma (t) (= \{ \Gamma (t) \}_{0 \leq t < T})$ is a smoothly evolving surface and is a closed Riemannian 2-dimensional manifold. The symbol $n_\Gamma = n_\Gamma ( x , t ) = { }^t (n^\Gamma_1 , n^\Gamma_2 , n^\Gamma_3)$ denotes the unit outer normal vector at $x \in \Gamma (t)$. For each $t \in [0,T)$, assume that $\Omega_A (t) \Subset \Omega$. Set $\Omega_B (t) = \Omega \setminus \overline{\Omega_A (t)}$. It is clear that $\Omega = \Omega_A (t) \cup \Gamma (t) \cup \Omega_B (t)$ (see Figure \ref{Fig1}). Set
\begin{multline*}
\Omega_{A,T} = \bigcup_{0< t < T} \{ \Omega_A (t) \times \{ t \} \},{ \ } \Omega_{B,T} = \bigcup_{0< t < T} \{ \Omega_B (t) \times \{ t \} \},\\
\Gamma_T = \bigcup_{0< t < T} \{ \Gamma (t) \times \{ t \} \},{ \ }\Omega_T = \Omega \times (0,T),{ \ }\partial \Omega_T = \partial \Omega \times (0,T).
\end{multline*}

In this paper we assume that the fluids in $\Omega_{A,T}$, $\Omega_{B,T}$, and $\Gamma_T$ are compressible ones. Let us state physical notations. For $\sharp = A , B,S$, let $\rho_\sharp = \rho_\sharp ( x , t)$, $v_\sharp = v_\sharp ( x , t) = { }^t (v^\sharp_1 , v^\sharp_2 , v^\sharp_3 )$, $\pi_\sharp = \pi_\sharp (x,t)$, $\theta_\sharp = \theta_\sharp ( x , t) $, $e_\sharp = e_\sharp (x,t)$, $\kappa_\sharp = \kappa_\sharp ( x , t)$ and $\mu_\sharp = \mu_\sharp (x,t)$, $\lambda_\sharp = \lambda_\sharp (x,t)$ be the \emph{density}, the \emph{velocity}, the \emph{pressure}, the \emph{temperature}, the \emph{internal energy}, the \emph{thermal conductivity}, and two \emph{viscosities} of the fluid in $\Omega_\sharp (t)$, where $\Omega_S (t) := \Gamma (t)$. The symbols $h_\sharp = h_\sharp (x,t)$, $\varsigma_\sharp = \varsigma_\sharp (x,t)$, $F^H_\sharp = F^H_\sharp (x,t)$, and $F^G_\sharp = F^G_\sharp (x,t)$ denote the \emph{enthalpy}, the \emph{entropy}, the \emph{Helmholtz free energy}, and the \emph{Gibbs free energy} of the fluid in $\Omega_\sharp (t)$, respectively (see Figure \ref{Fig1}). We call $\mu_\sharp$ the \emph{share viscosity} and $\mu_\sharp + \lambda_\sharp$ the \emph{dilatational viscosity}. In particular, we often call $\mu_S$ the \emph{surface share viscosity}, $\mu_S + \lambda_S$ the \emph{surface dilatational viscosity}. We assume that $\rho_\sharp$, $v_\sharp$, $\pi_\sharp$, $\theta_\sharp$, $e_\sharp$, $\kappa_\sharp$, $\mu_\sharp$, $\lambda_\sharp$, $h_\sharp$, $\varsigma_\sharp$, $F^H_\sharp$, and $F^G_\sharp$ are smooth functions in $\mathbb{R}^4$.

\begin{remark}\label{rem11} We call $v_S$ a \emph{total velocity}, and $\pi_S$ a \emph{total pressure}. Total velocity means that $v_S$ can be divided into surface velocity $u_S$ and motion velocity $w_S$, that is, $v_S = u_S + w_S$. Total pressure means one that includes surface pressure and tension. In this paper, we focus on the total velocity and the total pressure.
\end{remark}

Let us introduce several operators and notations. For each $f = f (x,t) \in C^1 (\mathbb{R}^4)$ and $V = V(x,t) = { }^t (V_1,V_2,V_3) \in [C^1 (\mathbb{R}^4)]^3$, $D_t^A f := \partial_t f + (v_A \cdot \nabla ) f $, $D_t^B f := \partial_t f + (v_B \cdot \nabla ) f $, $D_t^S f := \partial_t f + (v_S \cdot \nabla ) f $, ${\rm{grad}} f := \nabla f$, ${\rm{div}}V := \nabla \cdot V$, ${\rm{grad}}_\Gamma f := \nabla_\Gamma f$, ${\rm{div}}_\Gamma V := \nabla_\Gamma \cdot V$, $(V \cdot \nabla ) f := V_1 \partial_1 f + V_2 \partial_2 f + V_3 \partial_3 f$, $(V \cdot \nabla_\Gamma ) f := V_1 \partial^\Gamma_1 f + V_2 \partial^\Gamma_2 f + V_3 \partial^\Gamma_3 f$, where $\nabla := { }^t (\partial_1 , \partial_2 , \partial_3)$, $\partial_i := \partial/{\partial x_i}$, $\partial_t := \partial/{\partial t}$,  $\nabla_\Gamma := { }^t (\partial^\Gamma_1 , \partial^\Gamma_2 , \partial^\Gamma_3 ) $, and $\partial^\Gamma_i f := \sum_{j=1}^3(\delta_{ij} - n^\Gamma_i n^\Gamma_j ) \partial_j f = \partial_i f - n_i^\Gamma (n_\Gamma \cdot \nabla ) f$. Define the \emph{orthogonal projection $P_\Gamma$ to a tangent space} by
\begin{equation*}
P_\Gamma = P_\Gamma (x,t) = I_{3 \times 3} -n_\Gamma \otimes n_\Gamma = 
\begin{pmatrix}
1- n_1^\Gamma n_1^\Gamma & - n_1^\Gamma n_2^\Gamma & - n_1^\Gamma n_3^\Gamma\\
- n_2^\Gamma n_1^\Gamma & 1 - n_2^\Gamma n_2^\Gamma & - n_2^\Gamma n_3^\Gamma\\
- n_3^\Gamma n_1^\Gamma & - n_3^\Gamma n_2^\Gamma & 1 - n_3^\Gamma n_3^\Gamma
\end{pmatrix},
\end{equation*}
and the \emph{mean curvature $H_\Gamma$ in the direction $n_\Gamma$} by $H_\Gamma = H_\Gamma (x,t) = - {\rm{div}}_\Gamma n_\Gamma$, where $I_{3 \times 3}$ is the \emph{$3 \times 3$ identity matrix}, and $\otimes$ denotes the \emph{tensor product}. It is easy to check that $P_\Gamma n_\Gamma = { }^t (0,0,0)$ and $P_\Gamma \nabla f = \nabla_\Gamma f$.

Let us explain the key restrictions on the boundaries $\partial \Omega_T$ and $\Gamma_T$. We assume that
\begin{equation}\label{eq11}
\begin{cases}
v_B ={ }^t (0,0,0)  & \text{ on }\partial \Omega_T,\\
v_A \cdot n_\Gamma = v_B \cdot n_\Gamma = v_S \cdot n_\Gamma & \text{ on }\Gamma_{T},\\
P_\Gamma v_A = P_\Gamma v_B = r P_\Gamma v_S & \text{ on }\Gamma_T,
\end{cases}
\begin{cases}
( n_\Omega \cdot \nabla ) \theta_B =0  & \text{ on }\partial \Omega_T,\\
\theta_A = \theta_B = \theta_S & \text{ on }\Gamma_{T},
\end{cases}
\end{equation}
where $r \in \{ 0,1 \}$. We call $P_\Gamma v_A = P_\Gamma v_B = r P_\Gamma v_S$ a \emph{slip boundary condition} if $r=1$ and a \emph{no-slip boundary condition} if $r=0$. Note that we do not consider phase transition in this paper.

This paper has three purposes. The first purpose is to derive the following multiphase flow system with surface tension and flow:
\begin{equation}\label{eq12}
\begin{cases}
D_t^A \rho_A + ({\rm{div}} v_A) \rho_A = 0 & \text{ in }\Omega_{A,T},\\
D_t^B \rho_B + ({\rm{div}} v_B) \rho_B = 0 & \text{ in }\Omega_{B,T},\\
D_t^S \rho_S + ({\rm{div}}_\Gamma v_S) \rho_S = 0 & \text{ on }\Gamma_{T},
\end{cases}
\end{equation}
\begin{equation}\label{eq13}
\begin{cases}
\rho_A D_t^A e_A + ({\rm{div}} v_A) \pi_A = {\rm{div}} q_A + e_{D_A} & \text{ in }\Omega_{A,T},\\
\rho_B D_t^B e_B + ({\rm{div}} v_B ) \pi_B  = {\rm{div}} q_B +e_{D_B} & \text{ in }\Omega_{B,T},\\
\rho_S D_t^S e_S + ({\rm{div}}_\Gamma v_S) \pi_S = {\rm{div}}_\Gamma q_S + e_{D_S} + q_B \cdot n_\Gamma - q_A \cdot n_\Gamma & \text{ on }\Gamma_{T},
\end{cases}
\end{equation}
\begin{equation}\label{eq14}
\begin{cases}
\rho_A D_t^A v_A = {\rm{div}} \mathcal{T}_A  & \text{ in }\Omega_{A,T},\\
\rho_B D_t^B v_B = {\rm{div}} \mathcal{T}_B & \text{ in }\Omega_{B,T},\\
\rho_S D_t^S v_S = {\rm{div}}_\Gamma \mathcal{T}_S + \widetilde{\mathcal{T}}_B n_\Gamma - \widetilde{\mathcal{T}}_A n_\Gamma & \text{ on }\Gamma_{T},
\end{cases}
\end{equation}
where
\begin{equation}\label{eq15}
\begin{cases}
q_A = q_A (\theta_A ) := \kappa_A {\rm{grad}} \theta_A,\\ 
q_B = q_B (\theta_B ) := \kappa_B {\rm{grad}} \theta_B,\\ 
q_S = q_S (\theta_S ) := \kappa_S {\rm{grad}}_\Gamma \theta_S,
\end{cases}
\end{equation}
\begin{equation}\label{eq16}
\begin{cases}
e_{D_A} = e_{D_A} (v_A ) := \mu_A \vert D (v_A) \vert^2 + \lambda_A \vert {\rm{div}} v_A \vert^2,\\ 
e_{D_B} = e_{D_B} (v_B ) := \mu_B \vert D (v_B) \vert^2 + \lambda_B \vert {\rm{div}} v_B \vert^2,\\ 
e_{D_S} = e_{D_S} (v_S ) := \mu_S \vert D_\Gamma (v_S) \vert^2 + \lambda_S \vert {\rm{div}}_\Gamma v_S \vert^2, 
\end{cases}
\end{equation}
\begin{equation*}
\begin{cases}
D (v_A ) := \{ (\nabla v_A) + { }^t (\nabla v_A)  \}/2,\\
D (v_B ) := \{ (\nabla v_B) + { }^t (\nabla v_B)  \}/2,\\
D_\Gamma (v_S) := \{ (P_\Gamma \nabla_\Gamma v_S) + { }^t (P_\Gamma \nabla_\Gamma v_S) \}/2, 
\end{cases}
\end{equation*}
\begin{equation}\label{eq17}
\begin{cases}
\mathcal{T}_A = \mathcal{T}_A (v_A , \pi_A ) := \mu_A D (v_A) + \lambda_A ({\rm{div}} v_A) I_{3 \times 3} - \pi_A I_{3 \times 3},\\
\mathcal{T}_B = \mathcal{T}_B (v_B , \pi_B ) := \mu_B D (v_B) + \lambda_B ({\rm{div}} v_B) I_{3 \times 3} - \pi_B I_{3 \times 3},\\
\mathcal{T}_S = \mathcal{T}_S (v_S , \pi_S ) := \mu_S D_\Gamma (v_S) + \lambda_S ({\rm{div}}_\Gamma v_S) P_\Gamma - \pi_S P_\Gamma,
\end{cases}
\end{equation}
\begin{equation}\label{eq18}
\begin{cases}
\widetilde{\mathcal{T}}_A = \widetilde{\mathcal{T}}_A (v_A ,\pi_A ) = 
\begin{cases}
\mu_A  n_\Gamma \cdot (n_\Gamma \cdot \nabla ) v_A + \lambda_A ({\rm{div}} v_A) - \pi_A  \text{ if } r=0,\\
\mathcal{T}_A (v_A, \pi_A) \text{ if } r=1,
\end{cases}\\
\widetilde{\mathcal{T}}_B = \widetilde{\mathcal{T}}_B ( v_B , \pi_B ) = 
\begin{cases}
 \mu_B n_\Gamma \cdot (n_\Gamma \cdot \nabla ) v_B + \lambda_B ({\rm{div}} v_B) - \pi_B \text{ if }r=0,\\
\mathcal{T}_B (v_B , \pi_B) \text{ if } r =1.
\end{cases}
\end{cases}
\end{equation}
Here $\vert D (v_A) \vert^2 = D (v_A) : D(v_A)$, $\vert D (v_B) \vert^2 = D (v_B) : D (v_B)$, and $ \vert D_\Gamma (v_S) \vert^2 = D_\Gamma (v_S) : D_\Gamma (v_S)$. The symbol $:$ denotes the \emph{Frobenius inner product}, that is, $\mathcal{M} : \mathcal{N} =\sum_{i,j=1}^3 [\mathcal{M}]_{ij} [\mathcal{N}]_{ij}$, where $\mathcal{M}$, $\mathcal{N}$ are two $3\times 3$ matrices, and $[\mathcal{M}]_{ij}$ denotes the $(i,j)$-component of the matrix $\mathcal{M}$. We call $q_A$, $q_B$, $q_S$ the \emph{heat fluxes}, $e_{D_A}$, $e_{D_B}$, $e_{D_S}$ the \emph{energy densities for the energy dissipation due to the viscosities}, $D(v_A)$, $D (v_B)$ \emph{strain rate tensors}, $D_\Gamma (v_S)$ a \emph{surface strain tensor}, $\mathcal{T}_A$, $\mathcal{T}_B$ \emph{stress tensors}, and $\mathcal{T}_S$ a \emph{surface stress tensor}. We often call $\mathcal{T}_S$ the \emph{surface stress tensor determined by the Boussinesq-Scriven law}. More precisely, under the restrictions \eqref{eq11} we apply our energy densities and thermodynamic approaches to derive \eqref{eq12}-\eqref{eq14}. See Section \ref{sect4} for details.
\begin{remark}\label{rem12}
$(\rm{i})$ Using $n_\Gamma \cdot n_\Gamma =1$, $P_\Gamma \nabla f = \nabla_\Gamma f$, $(n_\Gamma \cdot \nabla_\Gamma )f = 0$, and $H_\Gamma = - {\rm{div}}_\Gamma n_\Gamma$, we easily check that
\begin{align*}
{\rm{div}}_\Gamma (\pi_S P_\Gamma ) & = {\rm{grad}}_\Gamma \pi_S + \pi_S H_\Gamma n_\Gamma,\\
2 D_\Gamma (v_S) & = P_\Gamma \{ (\nabla v_S) + { }^t (\nabla v_S) \} P_\Gamma = P_\Gamma \{ (\nabla_\Gamma v_S) + { }^t (\nabla_\Gamma v_S) \} P_\Gamma.
\end{align*}
Note that $2[D_\Gamma (v_S)]_{ij} = \partial^\Gamma_i v_j^S + \partial^\Gamma_j v_i^S - n^\Gamma_i (n_\Gamma \cdot \partial^\Gamma_j v_S) -n^\Gamma_j (n_\Gamma \cdot \partial^\Gamma_i v_S)$,
\begin{equation*}\nabla v_S =
\begin{pmatrix}
\partial_1 v_1^S & \partial_2 v_1^S & \partial_3 v_1^S\\
\partial_1 v_2^S & \partial_2 v_2^S & \partial_3 v_2^S\\
\partial_1 v_3^S & \partial_2 v_3^S & \partial_3 v_3^S
\end{pmatrix},{ \ } \nabla_\Gamma v_S =
\begin{pmatrix}
\partial^\Gamma_1 v_1^S & \partial^\Gamma_2 v_1^S & \partial^\Gamma_3 v_1^S\\
\partial^\Gamma_1 v_2^S & \partial^\Gamma_2 v_2^S & \partial^\Gamma_3 v_2^S\\
\partial^\Gamma_1 v_3^S & \partial^\Gamma_2 v_3^S & \partial^\Gamma_3 v_3^S
\end{pmatrix}.
\end{equation*}
We often call $\pi_S H_\Gamma n_\Gamma$ \emph{surface tension}.\\
$(\rm{ii})$ If the fluids in $\Omega_{A,T}$, $\Omega_{B,T}$, $\Gamma_T$ are barotropic fluids, then we can write
\begin{equation*}
\begin{cases}
\pi_A = \pi_A ( \rho_A) =  \rho_A p_A' (\rho_A) - p_A (\rho_A),\\
\pi_B = \pi_B ( \rho_B) =  \rho_B p_B' (\rho_B) - p_B (\rho_B ),\\
\pi_S = \pi_S ( \rho_S) =  \rho_S p_S' (\rho_S) - p_S (\rho_S ).
\end{cases}
\end{equation*}
Here $p_A$, $p_B$, $p_S$ are three $C^1$-functions, $p' = p'(r) = {dp}/{dr}(r)$. See Theorem \ref{thm24}, Remark \ref{rem25}, and Section \ref{sect4} for details.
\end{remark}

The second purpose is to study the conservative forms and conservation laws of system \eqref{eq12}-\eqref{eq14}. In fact, if we set $D_t^N f = \partial_t f + (v_S \cdot n_\Gamma) (n_\Gamma \cdot \nabla )f $, and the \emph{total energy} $E_\sharp = E_\sharp (x,t)$ by $E_\sharp =\rho_\sharp \vert v_\sharp \vert^2/2 + \rho_\sharp e_\sharp$, then we can write our system as the conservative form:
\begin{equation}\label{eq19}
\begin{cases}
\partial_t \rho_A + {\rm{div}}( \rho_A v_A ) = 0 & \text{ in }\Omega_{A,T},\\
\partial_t \rho_B + {\rm{div}} ( \rho_B v_B ) = 0 & \text{ in }\Omega_{B,T},\\
D_t^N \rho_S + {\rm{div}}_\Gamma ( \rho_S v_S) = 0 & \text{ on }\Gamma_{T},
\end{cases}
\end{equation}
\begin{equation}\label{eq1010}
\begin{cases}
\partial_t E_A + {\rm{div}} (E_A v_A - q_A - \mathcal{T}_A v_A)   = 0 & \text{ in }\Omega_{A,T},\\
\partial_t E_B + {\rm{div}} (E_B v_B - q_B - \mathcal{T}_B v_B)   = 0 & \text{ in }\Omega_{B,T},\\
D_t^N E_S + {\rm{div}}_\Gamma (E_S v_S - q_S - \mathcal{T}_S v_S)   = \mathcal{E}_S & \text{ on }\Gamma_{T},
\end{cases}
\end{equation}
\begin{equation}\label{eq1011}
\begin{cases}
\partial_t (\rho_A v_A ) + {\rm{div}}( \rho_A v_A \otimes v_A - \mathcal{T}_A) = { }^t (0,0,0)  & \text{ in }\Omega_{A,T},\\
\partial_t (\rho_B v_B ) + {\rm{div}}( \rho_B v_B \otimes v_B - \mathcal{T}_B ) = { }^t ( 0,0,0) & \text{ in }\Omega_{B,T},\\
D_t^N ( \rho_S v_S ) + {\rm{div}}_\Gamma ( \rho_S v_S \otimes v_S - \mathcal{T}_S ) = \widetilde{\mathcal{T}}_B n_\Gamma - \widetilde{\mathcal{T}}_A n_\Gamma & \text{ on }\Gamma_{T},
\end{cases}
\end{equation}
where
\begin{equation*}
\mathcal{E}_S := \widetilde{\mathcal{T}}_B n_\Gamma \cdot v_S - \widetilde{\mathcal{T}}_A n_\Gamma \cdot v_S + q_B \cdot n_\Gamma - q_A \cdot n_\Gamma .
\end{equation*}
Moreover, any solution to system \eqref{eq11}-\eqref{eq14} satisfies that for $t_1< t_2$,
\begin{multline}\label{eq1012}
\int_{\Omega_A (t_2 )} \rho_A (x,t_2) { \ }d x + \int_{ \Omega_B (t_2)} \rho_B (x,t_2) { \ } d x + \int_{\Gamma (t_2)} \rho_S (x,t_2) { \ } d \mathcal{H}^2_x\\
= \int_{\Omega_A (t_1)} \rho_A (x,t_1) { \ }d x + \int_{ \Omega_B (t_1)} \rho_B (x,t_1) { \ } d x + \int_{\Gamma (t_1)} \rho_S (x,t_1) { \ } d \mathcal{H}^2_x,
\end{multline}
\begin{multline}\label{eq1013}
\int_{\Omega_A (t_2 )} E_A { \ }d x + \int_{ \Omega_B (t_2)} E_B { \ } d x + \int_{\Gamma (t_2)} E_S { \ } d \mathcal{H}^2_x\\
= \int_{\Omega_A (t_1)} E_A { \ }d x + \int_{ \Omega_B (t_1)} E_B { \ } d x + \int_{\Gamma (t_1)} E_S { \ } d \mathcal{H}^2_x,
\end{multline}
\begin{multline}\label{eq1014}
\int_{\Omega_A (t_2 )} \frac{1}{2} \rho_A \vert v_A \vert^2 { \ }d x + \int_{ \Omega_B (t_2)} \frac{1}{2} \rho_B \vert v_B \vert^2 { \ } d x + \int_{\Gamma (t_2)} \frac{1}{2} \rho_S \vert v_S \vert^2 { \ } d \mathcal{H}^2_x\\
+ \int_{t_1}^{t_2} \int_{\Omega_A (t)} e_{D_A} { \ }d x d t + \int_{t_1}^{t_2} \int_{ \Omega_B (t)} e_{D_B} { \ } d x d t + \int_{t_1}^{t_2} \int_{\Gamma (t)} e_{D_S} { \ } d \mathcal{H}^2_x d t\\
= \int_{\Omega_A (t_1 )} \frac{1}{2} \rho_A \vert v_A \vert^2 { \ }d x + \int_{ \Omega_B (t_1)} \frac{1}{2} \rho_B \vert v_B \vert^2 { \ } d x + \int_{\Gamma (t_1)} \frac{1}{2} \rho_S \vert v_S \vert^2 { \ } d \mathcal{H}^2_x\\
+ \int_{t_1}^{t_2} \int_{\Omega_A (t)} ({\rm{div}} v_A) \pi_A { \ }d x d t + \int_{t_1}^{t_2} \int_{ \Omega_B (t)} ({\rm{div}} v_B) \pi_B { \ } d x d t\\
 + \int_{t_1}^{t_2} \int_{\Gamma (t)} ({\rm{div}}_\Gamma v_S) \pi_S { \ } d \mathcal{H}^2_x d t.
\end{multline}
Here $d \mathcal{H}^2_x$ denotes the \emph{2-dimensional Hausdorff measure}. Under some assumptions (see Theorem \ref{thm28}), any solution to system \eqref{eq11}-\eqref{eq13} satisfies that for $t_1 < t_2$
\begin{multline}\label{eq1015}
\int_{\Omega_A (t_2 )} \rho_A v_A { \ }d x + \int_{ \Omega_B (t_2)} \rho_B v_B { \ } d x + \int_{\Gamma (t_2)} \rho_S v_S { \ } d \mathcal{H}^2_x\\
= \int_{\Omega_A (t_1)} \rho_A v_A { \ }d x + \int_{ \Omega_B (t_1)} \rho_B v_B { \ } d x + \int_{\Gamma (t_1)} \rho_S v_S { \ } d \mathcal{H}^2_x.
\end{multline}
We often call \eqref{eq1012}, \eqref{eq1013}, \eqref{eq1014}, and \eqref{eq1015}, the \emph{law of conservation of mass}, the \emph{law of conservation of total energy}, the \emph{energy law of our system}, and the \emph{law of conservation of momentum}, respectively. See Theorem \ref{thm28} and Section \ref{sect5} for details.
\begin{remark}\label{rem13}
From $D_t^S f = D_t^N f + (v_S \cdot \nabla_\Gamma )f $ for $f \in C^1 (\mathbb{R}^4)$, we see that
\begin{align*}
& D_t^N ( \rho_S f ) + {\rm{div}}_\Gamma ( \rho_S f v_S ) = \rho_S D_t^S f,\\
& D_t^N ( \rho_S v_S ) + {\rm{div}}_\Gamma ( \rho_S v_S \otimes v_S ) = \rho_S D_t^S v_S.
\end{align*}
Since $\rho_S, v_S \in C^1 (\mathbb{R}^4)$ in this paper, we can define $D_t^N$ for $\rho_S$, $\rho_S v_S$.
\end{remark}

The third purpose is to investigate the thermodynamic potential such as the enthalpy $h_\sharp$, the entropy $\varsigma_\sharp$, the Helmholtz free energy $F_\sharp^H$, and the Gibbs free energy $F_\sharp^G$ of the fluid in $\Omega_\sharp (t)$, where $\sharp = A,B,S$. Assume that $(\rho_\sharp, \theta_\sharp)$ are positive functions. Set the \emph{enthalpy} $h_\sharp$ by $h_\sharp = e_\sharp + {\pi_\sharp}/{\rho_\sharp}$. Then
\begin{equation}\label{eq1016}
\begin{cases}
\partial_t (\rho_A h_A) + {\rm{div}} (\rho_A h_A v_A - q_A )  = e_{D_A} + D_t^A \pi_A,\\
\partial_t (\rho_B h_B) + {\rm{div}} (\rho_B h_B v_B - q_B )  = e_{D_B} + D_t^B \pi_B,\\
D_t^N (\rho_S h_S) + {\rm{div}}_\Gamma (\rho_S h_S v_S - q_S ) = e_{D_S} + D_t^S \pi_S + q_B \cdot n_\Gamma - q_A \cdot n_\Gamma.
\end{cases}
\end{equation}
Suppose that the thermodynamic identity (Gibbs \cite{Gib1906}): $D_t^\sharp e_\sharp = \theta_\sharp D_t^\sharp \varsigma_\sharp - \pi_\sharp D_t^\sharp (1/{\rho_\sharp})$ holds. Then
\begin{equation}\label{eq1017}
\begin{cases}
\partial_t (\rho_A \varsigma_A ) + {\rm{div}} \bigg( \rho_A \varsigma_A v_A - \frac{q_A}{\theta_A} \bigg) = \frac{e_{D_A}}{\theta_A} + \frac{q_A \cdot {\rm{grad}} \theta_A  }{\theta_A^2},\\
\partial_t (\rho_B \varsigma_B ) + {\rm{div}} \bigg( \rho_B \varsigma_B v_B - \frac{q_B}{\theta_B} \bigg) = \frac{e_{D_B}}{\theta_B} + \frac{q_B \cdot {\rm{grad}} \theta_B  }{\theta_B^2},\\
D_t^N (\rho_S \varsigma_S ) + {\rm{div}}_\Gamma \bigg( \rho_S \varsigma_S v_S - \frac{q_S}{\theta_S} \bigg) = \frac{e_{D_S}}{\theta_S} + \frac{q_S \cdot {\rm{grad}}_\Gamma \theta_S  }{\theta_S^2} + \frac{q_B \cdot n_\Gamma - q_A \cdot n_\Gamma   }{\theta_S}.
\end{cases}
\end{equation}
Set the \emph{Helmholtz free energy} $F^H_\sharp$ by $F^H_\sharp = e_\sharp - \theta_\sharp \varsigma_\sharp$. Then
\begin{equation}\label{eq1018}
\begin{cases}
\rho_A D_t^A F^H_A + \rho_A \varsigma_A D^A_t \theta_A = - ({\rm{div}} v_A) \pi_A,\\
\rho_B D_t^B F^H_B + \rho_B \varsigma_B D^B_t \theta_B = - ({\rm{div}} v_B) \pi_B,\\
\rho_S D_t^S F^H_S + \rho_S \varsigma_S D^S_t \theta_S = - ({\rm{div}}_\Gamma v_S) \pi_S.
\end{cases}
\end{equation}
Set the \emph{Gibbs free energy} $F^G_\sharp$ by $F^G_\sharp = h_\sharp - \theta_\sharp \varsigma_\sharp$. Then
\begin{equation}\label{eq1019}
\begin{cases}
\rho_A D_t^A F^G_A + \rho_A \varsigma_A D^A_t \theta_A = D_t^A \pi_A,\\
\rho_B D_t^B F^G_B + \rho_B \varsigma_B D^B_t \theta_B = D_t^B \pi_B,\\
\rho_S D_t^S F^G_S + \rho_S \varsigma_S D^S_t \theta_S = D_t^S \pi_S.
\end{cases}
\end{equation}
See Theorem \ref{thm29} and Section \ref{sect6} for details.

\begin{remark}\label{rem14} $(\rm{i})$ Since
\begin{align*}
\mathcal{T}_A (v_A , \pi_A ) : D (v_A) = e_{D_A} - ({\rm{div}} v_A) \pi_A,\\
\mathcal{T}_B (v_B , \pi_B ) : D (v_B) = e_{D_B} - ({\rm{div}} v_B) \pi_B,\\
\mathcal{T}_S (v_S , \pi_S ) : D_\Gamma (v_S) = e_{D_S} - ({\rm{div}}_\Gamma v_S) \pi_S,
\end{align*}
it follows from \eqref{eq1018} to see that
\begin{equation*}
\begin{cases}
\rho_A D_t^A F^H_A + \rho_A \varsigma_A D^A_t \theta_A - \mathcal{T}_A(v_A, \pi_A) : D (v_A) = - e_{D_A},\\
\rho_B D_t^B F^H_B + \rho_B \varsigma_B D^B_t \theta_B - \mathcal{T}_B(v_B,\pi_B) : D (v_B) = - e_{D_B},\\
\rho_S D_t^S F^H_S + \rho_S \varsigma_S D^S_t \theta_S - \mathcal{T}_S (v_S, \pi_S) : D_\Gamma (v_S) = - e_{D_S}.
\end{cases}
\end{equation*}
Note that $P_\Gamma : D_\Gamma (v_S) = {\rm{div}}_\Gamma v_S$.\\
\noindent $(\rm{ii})$ Set $\mathcal{D}_A f := \rho_A D_t^A f$, $\mathcal{D}_B f := \rho_B D_t^B f$, $\mathcal{D}_S f := \rho_S D_t^S f$. Then
\begin{equation*}
\begin{cases}
\mathcal{D}_A e_A = \theta_A \mathcal{D}_A \varsigma_A - \pi_A \mathcal{D}_A (1/{\rho_A}),\\
\mathcal{D}_B e_B = \theta_B \mathcal{D}_B \varsigma_B - \pi_B \mathcal{D}_B (1/{\rho_B}),\\
\mathcal{D}_S e_S = \theta_S \mathcal{D}_S \varsigma_S - \pi_S \mathcal{D}_S (1/{\rho_S}),
\end{cases}
\begin{cases}
\mathcal{D}_A h_A = \theta_A \mathcal{D}_A \varsigma_A + (1/{\rho_A}) \mathcal{D}_A \pi_A,\\
\mathcal{D}_B h_B = \theta_B \mathcal{D}_B \varsigma_B + (1/{\rho_B}) \mathcal{D}_B \pi_B,\\
\mathcal{D}_S h_S = \theta_S \mathcal{D}_S \varsigma_S + (1/{\rho_S}) \mathcal{D}_S \pi_S,
\end{cases}
\end{equation*}
\begin{equation*}
\begin{cases}
\mathcal{D}_A F^H_A = - \varsigma_A \mathcal{D}_A \theta_A - \pi_A  \mathcal{D}_A(1/{\rho_A}),\\
\mathcal{D}_B F^H_B = - \varsigma_B \mathcal{D}_B \theta_B - \pi_B  \mathcal{D}_B(1/{\rho_B}),\\
\mathcal{D}_S F^H_S = - \varsigma_S \mathcal{D}_S \theta_S - \pi_S  \mathcal{D}_S(1/{\rho_S}),
\end{cases}
\begin{cases}
\mathcal{D}_A F^G_A = - \varsigma_A \mathcal{D}_A \theta_A + (1/{\rho_A})  \mathcal{D}_A \pi_A,\\
\mathcal{D}_B F^G_B = - \varsigma_B \mathcal{D}_B \theta_B + (1/{\rho_B})  \mathcal{D}_B \pi_B,\\
\mathcal{D}_S F^G_S = - \varsigma_S \mathcal{D}_S \theta_S + (1/{\rho_S}) \mathcal{D}_S \pi_S.
\end{cases}
\end{equation*}
We often call $1/{\rho_\sharp}$ a \emph{specific volume}.
\end{remark}

Let us explain the main difficulties in the derivation of our multiphase flow system with surface tension and flow, and the key ideas to overcome these difficulties. The main difficulties are to derive the viscous terms of the system,  to derive the surface tension from a theoretical point of view, and to derive the dependencies among fluid-flows in two moving domains and surface flow. To overcome these difficulties, we apply the first law of thermodynamics (Theorem \ref{thm24}), our energy densities (Definition \ref{def22}), and the conservation law of total energy to derive equations \eqref{eq13} and \eqref{eq14}. See Section \ref{sect4} for details.

Let us mention the study of surface flow (interfacial flow). Boussinesq \cite{Bou13} first discovered the existence of surface flow. Scriven \cite{Scr60} considered their surface stress tensor. Slattery \cite{Sla64} investigated some properties of the surface stress tensor determined by the Boussinesq-Scriven law (see $\mathcal{T}_S$ in \eqref{eq17}). Then many researchers have studied surface flow (see Slattery-Sagis-Oh \cite{SSO07} and Gatignol-Prud'homme \cite{GP01} for the study of interfacial phenomena).

Let us state derivations of the governing equations for the motion of the viscous fluid on manifolds and surfaces. Taylor \cite{Tay92} introduced their surface stress tensor to make their incompressible viscous fluid system on a manifold. Mitsumatsu-Yano \cite{MY02} applied their energetic variational approach to derive their incompressible viscous fluid system on a manifold. Arnaudon-Cruzeiro \cite{AC12} made use of their stochastic variational approach to derive their incompressible viscous fluid system on a manifold. Koba-Liu-Giga \cite{KLG17} employed their energetic variational approach and the generalized Helmholtz-Weyl decomposition on a closed surface to derive their incompressible fluid systems on an evolving closed surface. Koba \cite{K18,K22} applied their energetic variational approaches and the first law of thermodynamics to derive their compressible fluid flow systems on an evolving closed surface and an evolving surface with a boundary. This paper modifies and improves the methods in \cite{K18,K22} to derive our multiphase flow system.

Now we mention results for modeling of multiphase flow system with surface flow. Bothe-Pr\"{u}ss \cite{BP10} made their multiphase flow system with surface flow by using the surface stress tensor determined by the Boussinesq-Scriven law. Koba \cite{K23} derived the inviscid multiphase flow system with surface flow by applying a geometric variational approach. This paper derives our multiphase flow system from a thermodynamic point of view. Therefore, our modeling methods are different from ones in \cite{BP10} and \cite{K23}.

Finally, we introduce some results and textbooks related to this paper. Hyon-Kwak-Liu \cite{HKL10} and Koba-Sato \cite{KS17} applied their energetic variational approaches to derive and study their complex and non-Newtonian fluid systems in domains. Feireisl \cite{Fei15} studied the motion of the viscous fluid in a domain from a thermodynamic point of view. We refer the readers to Gyarmati \cite{Gya70} and Gurtin-Fried-Anand \cite{GFA10} for the theory of thermodynamics, Chapter XIII in Angel \cite{Ang06} for thermodynamical potential such as internal energy, enthalpy, entropy, and free energies, and Pr\"{u}ss-Simonett \cite{PS16} for several elliptic and parabolic equations on hypersurfaces.

The outline of this paper is as follows: In Section \ref{sect2}, we first introduce the transport theorems and the energy densities for our model, and then we state the main results of this paper. In Section \ref{sect3}, we make use of the transport theorems to derive the first law of thermodynamics, and apply integration by parts to calculate variations of our dissipation energies. In Section \ref{sect4}, we apply our thermodynamic approaches to make mathematical models for multiphase flow with surface tension and flow. In Section \ref{sect5}, we study the conservation and energy laws of our system. In Section \ref{sect6}, we investigate the thermodynamic potential for our system.

\section{Main Results}\label{sect2}

We first introduce the transport theorems and the energy densities for our multiphase flow system. Then we state the main results.
\begin{definition}[$\Omega_T$ is flowed by the velocity fields $(v_A,v_B,v_S)$]\label{def21} We say that $\Omega_T$ is \emph{flowed by the velocity fields} $(v_A,v_B ,v_S)$ if for each $0< t <T$, $f \in C^1 (\mathbb{R}^4)$, and $\Lambda \subset \Omega$,
\begin{align}
\label{eq21} \frac{d}{d t} \int_{\Omega_A (t) \cap \Lambda} f (x,t) { \ }d x & = \int_{\Omega_A (t) \cap \Lambda} \{ D_t^A f + ({\rm{div}} v_A) f \} { \ }dx,\\ 
\label{eq22} \frac{d}{d t} \int_{\Omega_B (t) \cap \Lambda} f (x,t) { \ }d x & = \int_{\Omega_B (t) \cap \Lambda} \{ D_t^B f + ({\rm{div}} v_B ) f \} { \ }dx,\\ 
\label{eq23} \frac{d}{d t} \int_{\Gamma (t) \cap \Lambda} f (x,t) { \ }d \mathcal{H}^2_x & = \int_{\Gamma (t) \cap \Lambda} \{ D_t^S f + ({\rm{div}}_\Gamma v_S) f \} { \ }d \mathcal{H}^2_x. 
\end{align}
Here $D_t^\sharp f = \partial_t f + (v_\sharp \cdot \nabla) f$, ${\rm{div}}_\Gamma v_S = \partial_1^\Gamma v^S_1 + \partial_2^\Gamma v^S_2 + \partial_3^\Gamma v^S_3$, $\partial_j^\Gamma f = \partial_j f - n^\Gamma_j (n_\Gamma \cdot \nabla )f $, where $\sharp = A,B,S$, and $j=1,2,3$.
\end{definition}
\noindent We often call \eqref{eq21}, \eqref{eq22} the \emph{transport theorems}, and \eqref{eq22} the \emph{surface transport theorem}. The derivation of the surface transport theorem can be founded in \cite{Bet86,GSW89,DE07,KLG17}. Throughout this paper we assume that $\Omega_{T}$ is flowed by the velocity fields $(v_A, v_B ,v_S)$.

\begin{definition}[Energy densities]\label{def22} Set
\begin{equation*}
\begin{cases}
e_{K_A} = \rho_A \vert v_A \vert^2/2,\\ 
e_{K_B} = \rho_B \vert v_B \vert^2/2,\\ 
e_{K_S} = \rho_S \vert v_S \vert^2/2, 
\end{cases}
\begin{cases}
e_{D_A} = \mu_A \vert D (v_A) \vert^2 + \lambda_A \vert {\rm{div}} v_A \vert^2,\\ 
e_{D_B} = \mu_B \vert D (v_B) \vert^2 + \lambda_B \vert {\rm{div}} v_B \vert^2,\\ 
e_{D_S} = \mu_S \vert D_\Gamma (v_S) \vert^2 + \lambda_S \vert {\rm{div}}_\Gamma v_S \vert^2, 
\end{cases}
\end{equation*}
\begin{equation*}
\begin{cases}
e_{W_A} = ({\rm{div}} v_A ) \pi_A,\\ 
e_{W_B} =  ( {\rm{div}} v_B ) \pi_B,\\ 
e_{W_S} = ( {\rm{div}}_\Gamma v_S ) \pi_S, 
\end{cases}
\begin{cases}
e_{Q_A} = \kappa_A \vert {\rm{grad}} \theta_A \vert^2,\\ 
e_{Q_B} = \kappa_B \vert {\rm{grad}} \theta_B \vert^2,\\ 
e_{Q_S} =  \kappa_S \vert {\rm{grad}}_\Gamma \theta_S \vert^2.
\end{cases}
\end{equation*}
We call $e_{K_\sharp}$ the \emph{kinetic energy}, $e_{D_\sharp}$ the \emph{energy density for the energy dissipation due to the viscosities} $(\mu_\sharp, \lambda_\sharp )$, $e_{W_\sharp}$ the \emph{power density for the work done by the pressure} $\pi_\sharp$, and $e_{Q_\sharp}$ the \emph{energy density for the energy dissipation due to thermal diffusion}. 
\end{definition}
\noindent See \cite{KS17}, \cite{K18}, and Remark 2.5 in \cite{K22} for mathematical validity of the energy densities. Applying the energy densities, the restrictions \eqref{eq11}, and our thermodynamic approaches, we derive system \eqref{eq12}-\eqref{eq14} in Section \ref{sect4}.

We now state the main results of this paper. From Definition \ref{def21}, we have
\begin{proposition}[Continuity equations]\label{prop23}
Assume that for each $0< t <T$ and $\Lambda \subset \Omega$,
\begin{align*}
\frac{d}{d t} \bigg( \int_{\Omega_A (t) \cap \Lambda} \rho_A (x,t) { \ }d x + \int_{\Omega_B (t) \cap \Lambda} \rho_B (x,t) { \ }d x +  \int_{\Gamma (t) \cap \Lambda} \rho_S (x,t) { \ }d \mathcal{H}^2_x  \bigg) = 0. 
\end{align*}
Then $( \rho_A , \rho_B , \rho_S )$ satisfies \eqref{eq12}.
\end{proposition}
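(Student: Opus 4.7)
The plan is to apply the three transport formulas \eqref{eq21}--\eqref{eq23} term by term and then exploit the arbitrariness of the test set $\Lambda$ in two stages: first localize away from the interface to obtain the bulk equations, then use those to extract the surface equation.

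First, I would rewrite the hypothesis using \eqref{eq21}, \eqref{eq22}, \eqref{eq23} as
\begin{equation*}
\int_{\Omega_A (t) \cap \Lambda} \{ D_t^A \rho_A + ({\rm{div}} v_A) \rho_A \}\, dx + \int_{\Omega_B (t) \cap \Lambda} \{ D_t^B \rho_B + ({\rm{div}} v_B) \rho_B \}\, dx + \int_{\Gamma (t) \cap \Lambda} \{ D_t^S \rho_S + ({\rm{div}}_\Gamma v_S) \rho_S \}\, d\mathcal{H}^2_x = 0,
\end{equation*}
valid for every $0 < t < T$ and every $\Lambda \subset \Omega$.

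Next, I would fix $t \in (0,T)$ and pick an arbitrary point $x_0 \in \Omega_A(t)$. Since $\Omega_A(t)$ is open, there is a small open ball $\Lambda = B_r(x_0)$ with $\overline{B_r(x_0)} \subset \Omega_A(t)$, so that $\Lambda \cap \Omega_B(t) = \emptyset$ and $\Lambda \cap \Gamma(t) = \emptyset$. The identity then reduces to $\int_{B_r(x_0)} \{ D_t^A \rho_A + ({\rm{div}} v_A) \rho_A \}\, dx = 0$. Because the integrand is continuous (by the smoothness assumption on $\rho_A, v_A$), the fundamental lemma of calculus of variations, applied as $r \to 0$, gives $D_t^A \rho_A + ({\rm{div}} v_A) \rho_A = 0$ at $(x_0, t)$. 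The same argument with balls inside $\Omega_B(t)$ yields the $\Omega_B$--equation.

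Finally, with the two bulk continuity equations in hand, the original identity collapses to
\begin{equation*}
\int_{\Gamma(t) \cap \Lambda} \{ D_t^S \rho_S + ({\rm{div}}_\Gamma v_S) \rho_S \}\, d\mathcal{H}^2_x = 0
\end{equation*}
for every $\Lambda \subset \Omega$. Choosing $\Lambda$ to be a small open neighborhood in $\Omega$ of an arbitrary point $x_0 \in \Gamma(t)$, so that $\Gamma(t) \cap \Lambda$ is a shrinking geodesic patch around $x_0$, and invoking the surface analog of the fundamental lemma (valid because the integrand is continuous on the smooth closed surface $\Gamma(t)$), I obtain the third equation of \eqref{eq12} pointwise on $\Gamma_T$.

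The argument is essentially a localization exercise, so there is no serious obstacle; the only point requiring care is ensuring that the test sets $\Lambda$ can be chosen to isolate one of the three integrals at a time, which is guaranteed by $\Omega = \Omega_A(t) \cup \Gamma(t) \cup \Omega_B(t)$ with $\Omega_A(t), \Omega_B(t)$ open and $\Gamma(t)$ a smooth closed hypersurface separating them.
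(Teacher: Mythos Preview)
Your argument is correct and is precisely the standard localization that the paper has in mind; note that the paper does not actually write out a proof of Proposition~\ref{prop23} (it states ``The proof of Proposition~\ref{prop23} is left for the readers''), so there is nothing to compare against beyond confirming that your approach is the intended one.
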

\noindent The proof of Proposition \ref{prop23} is left for the readers.

\begin{theorem}[First law of thermodynamics]\label{thm24}
Let $\widetilde{Q}_A, \widetilde{Q}_B, \widetilde{Q}_S \in C (\mathbb{R}^4)$. Assume that $(\rho_A , \rho_B , \rho_S )$ satisfies \eqref{eq12}. Then\\
\noindent $(\rm{i})$ Suppose that for every $0 < t <T$ and $\Lambda \subset \Omega$,
\begin{align*}
\frac{d}{d t} \int_{\Omega_A (t) \cap \Lambda} \rho_A e_A { \ }d x = \int_{\Omega_A (t) \cap \Lambda} \{ \widetilde{Q}_A - ({\rm{div}} v_A) \pi_A \} { \ } d x,\\
\frac{d}{d t} \int_{\Omega_B (t) \cap \Lambda} \rho_B e_B { \ }d x = \int_{\Omega_B (t) \cap \Lambda} \{ \widetilde{Q}_B - ({\rm{div}} v_B) \pi_B \} { \ } d x,\\
\frac{d}{dt} \int_{\Gamma (t) \cap \Lambda} \rho_S e_S { \ }d \mathcal{H}^2_x = \int_{\Gamma (t) \cap \Lambda} \{ \widetilde{Q}_S - ({\rm{div}}_\Gamma v_S) \pi_S \} { \ } d \mathcal{H}^2_x.
\end{align*}
Then
\begin{equation}\label{eq24}
\begin{cases}
\rho_A D_t^A e_A + ({\rm{div}} v_A ) \pi_A = \widetilde{Q}_A \text{ in } \Omega_{A,T},\\
\rho_B D_t^B e_B + ({\rm{div}} v_B ) \pi_B = \widetilde{Q}_B \text{ in } \Omega_{B,T},\\
\rho_S D_t^S e_S + ({\rm{div}}_\Gamma v_S ) \pi_S = \widetilde{Q}_S \text{ on } \Gamma_T.
\end{cases}
\end{equation}
\noindent $(\rm{ii})$ Let $p_A,p_B,p_S \in C^1 (\mathbb{R})$. Suppose that for every $0 < t <T$ and $\Lambda \subset \Omega$,
\begin{align*}
\frac{d}{d t} \int_{\Omega_A (t) \cap \Lambda} \{ \rho_A e_A  - p_A (\rho_A ) \} { \ }d x = \int_{\Omega_A (t) \cap \Lambda} \widetilde{Q}_A { \ } d x,\\
\frac{d}{d t} \int_{\Omega_B (t) \cap \Lambda} \{ \rho_B e_B  - p_B (\rho_B ) \} { \ }d x = \int_{\Omega_B (t) \cap \Lambda} \widetilde{Q}_B { \ } d x,\\
\frac{d}{dt} \int_{\Gamma (t) \cap \Lambda} \{ \rho_S e_S  - p_S (\rho_S ) \} { \ }d \mathcal{H}^2_x = \int_{\Gamma (t) \cap \Lambda} \widetilde{Q}_S { \ } d \mathcal{H}^2_x.
\end{align*}
Then
\begin{equation}\label{eq25}
\begin{cases}
\rho_A D_t^A e_A + ({\rm{div}} v_A ) \Pi_A = \widetilde{Q}_A \text{ in } \Omega_{A,T},\\
\rho_B D_t^B e_B + ({\rm{div}} v_B ) \Pi_B = \widetilde{Q}_B \text{ in } \Omega_{B,T},\\
\rho_S D_t^S e_S + ({\rm{div}}_\Gamma v_S ) \Pi_S = \widetilde{Q}_S \text{ on } \Gamma_T.
\end{cases}
\end{equation}
Here
\begin{equation}\label{eq26}
\begin{cases}
\Pi_A = \Pi_A (\rho_A ) = \rho_A p'_A (\rho_A) - p_A (\rho_A),\\
\Pi_B = \Pi_B (\rho_B ) = \rho_B p'_B (\rho_B) - p_B (\rho_B),\\
\Pi_S = \Pi_S (\rho_S ) = \rho_S p'_S (\rho_S) - p_S (\rho_S).
\end{cases}
\end{equation}
\end{theorem}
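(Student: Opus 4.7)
The plan is to reduce each of the integral hypotheses to a pointwise identity via the transport theorems from Definition \ref{def21}, the continuity equations \eqref{eq12}, and a standard localization over arbitrary $\Lambda \subset \Omega$.

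For part (i), I would first apply \eqref{eq21} with $f = \rho_A e_A$ to rewrite
\begin{equation*}
\frac{d}{dt}\int_{\Omega_A(t)\cap\Lambda}\rho_A e_A\,dx = \int_{\Omega_A(t)\cap\Lambda}\bigl\{D_t^A(\rho_A e_A) + ({\rm{div}}\,v_A)\rho_A e_A\bigr\}\,dx.
\end{equation*}
Then I would expand $D_t^A(\rho_A e_A) = e_A D_t^A \rho_A + \rho_A D_t^A e_A$ and substitute the continuity equation $D_t^A\rho_A = -({\rm{div}}\,v_A)\rho_A$ from \eqref{eq12}, which collapses the integrand to $\rho_A D_t^A e_A$. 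Equating with the given expression for the time derivative yields
\begin{equation*}
\int_{\Omega_A(t)\cap\Lambda}\bigl\{\rho_A D_t^A e_A + ({\rm{div}}\,v_A)\pi_A - \widetilde{Q}_A\bigr\}\,dx = 0,
\end{equation*}
and since $\Lambda$ is arbitrary and the integrand is continuous, localization yields the first equation of \eqref{eq24}. The argument for $\Omega_B(t)$ is identical, and for $\Gamma(t)$ it is the same computation using the surface transport theorem \eqref{eq23} together with the surface continuity equation $D_t^S \rho_S = -({\rm{div}}_\Gamma v_S)\rho_S$, together with Remark \ref{rem13} to justify the relevant chain rule on the surface.

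For part (ii), the new ingredient is to handle the $p_\sharp(\rho_\sharp)$ correction. I would apply the transport theorems to $f = p_A(\rho_A)$ and compute $D_t^A p_A(\rho_A) = p_A'(\rho_A) D_t^A \rho_A = -p_A'(\rho_A)\rho_A({\rm{div}}\,v_A)$ by the chain rule and the continuity equation. Combining gives
\begin{equation*}
D_t^A p_A(\rho_A) + ({\rm{div}}\,v_A) p_A(\rho_A) = -({\rm{div}}\,v_A)\bigl[\rho_A p_A'(\rho_A) - p_A(\rho_A)\bigr] = -({\rm{div}}\,v_A)\Pi_A,
\end{equation*}
using the definition \eqref{eq26}. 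Adding this identity to the one from part (i) (without the pressure term, i.e.\ keeping only the transport-theorem expansion of $\rho_A e_A$) and matching against the hypothesis $\frac{d}{dt}\int [\rho_A e_A - p_A(\rho_A)]\,dx = \int \widetilde{Q}_A\,dx$ gives
\begin{equation*}
\int_{\Omega_A(t)\cap\Lambda}\bigl\{\rho_A D_t^A e_A + ({\rm{div}}\,v_A)\Pi_A - \widetilde{Q}_A\bigr\}\,dx = 0,
\end{equation*}
and localization again yields \eqref{eq25}. The surface case proceeds identically using \eqref{eq23}.

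There is no serious obstacle; the whole argument is a systematic application of the transport theorems plus the continuity equations plus localization. The one point to watch is the surface case, where one must trust that the surface chain rule $D_t^S p_S(\rho_S) = p_S'(\rho_S) D_t^S \rho_S$ goes through (it does, since $D_t^S$ is a first-order differential operator and $p_S \in C^1$), and that the surface continuity equation from \eqref{eq12} supplies the needed cancellation. Beyond that, the only ``trick'' is recognizing that $\rho p'(\rho) - p(\rho)$ is exactly the combination produced by the transport theorem applied to $p(\rho)$, which motivates the definition of $\Pi_\sharp$ in \eqref{eq26}.
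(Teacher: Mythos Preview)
Your proposal is correct and follows essentially the same approach as the paper: apply the transport theorems from Definition~\ref{def21}, use the continuity equations \eqref{eq12} to simplify, and localize over arbitrary $\Lambda$. The paper only writes out part~(ii) (for the surface case) and treats $\rho_S e_S - p_S(\rho_S)$ as a single integrand rather than splitting off the $p_S(\rho_S)$ term as you do, but the computations are identical; your reference to Remark~\ref{rem13} is unnecessary since $D_t^S$ is already an ordinary first-order operator for which the chain rule holds directly.
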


\begin{remark}\label{rem25}$(\rm{i})$ We can write system \eqref{eq24} as follows:
\begin{equation*}
\begin{cases}
\mathcal{D}_A e_A = \widetilde{Q}_A - \pi_A \mathcal{D}_A (1/\rho_A),\\
\mathcal{D}_B e_B = \widetilde{Q}_B - \pi_B \mathcal{D}_B (1/\rho_B),\\
\mathcal{D}_S e_S = \widetilde{Q}_S - \pi_S \mathcal{D}_S (1/\rho_S),
\end{cases}
\end{equation*}
where $\mathcal{D}_\sharp f = \rho_\sharp D_t^\sharp f $. Therefore, we call Theorem \ref{thm24} the \emph{first law of thermodynamics} in this paper. See also $(\rm{ii})$ in Remark \ref{rem14}.\\
$(\rm{ii})$ The pressures $(\Pi_A, \Pi_B, \Pi_S)$ derived from the assertion $(\rm{ii})$ of Theorem \ref{thm24} correspond to the pressures derived from an energetic variational approach $($see \cite{K23}$)$.
\end{remark}

Next we consider the variation of our dissipation energies. Let $r \in \{ 0, 1 \}$ and $0<t<T$. Let $\varphi_A, \varphi_B, \varphi_S \in [C^\infty (\mathbb{R}^3)]^3$ and $\psi_A , \psi_B, \psi_S \in C^\infty ( \mathbb{R}^3)$. For $- 1 < \varepsilon < 1$, $v_A^\varepsilon := v_A + \varepsilon \varphi_A$, $v_B^\varepsilon := v_B + \varepsilon \varphi_B$, $v_S^\varepsilon := v_S + \varepsilon \varphi_S$, $\theta^\varepsilon_A := \theta_A + \varepsilon \psi_A$, $\theta^\varepsilon_B := \theta_B + \varepsilon \psi_B$, and $\theta^\varepsilon_S := \theta_S + \varepsilon \psi_S$. We call ($v_A^\varepsilon$, $v_B^\varepsilon$, $v_S^\varepsilon$, $\theta_A^\varepsilon$, $\theta_B^\varepsilon$, $\theta_S^\varepsilon$) variations of ($v_A$, $v_B$, $v_S$, $\theta_A$, $\theta_B$, $\theta_S$). From \eqref{eq11}, for $-1 < \varepsilon < 1$, we assume that
\begin{equation}\label{eq27}
\begin{cases}
v^\varepsilon_B ={ }^t (0,0,0)  & \text{ on }\partial \Omega,\\
v^\varepsilon_A \cdot n_\Gamma = v^\varepsilon_B \cdot n_\Gamma = v^\varepsilon_S \cdot n_\Gamma & \text{ on }\Gamma (t),\\
P_\Gamma v^\varepsilon_A = P_\Gamma v^\varepsilon_B = r P_\Gamma v^\varepsilon_S & \text{ on }\Gamma (t),
\end{cases}
\begin{cases}
( n_\Omega \cdot \nabla ) \theta^\varepsilon_B =0  & \text{ on }\partial \Omega,\\
\theta^\varepsilon_A = \theta^\varepsilon_B = \theta^\varepsilon_S & \text{ on }\Gamma (t).
\end{cases}
\end{equation}
Then we have
\begin{equation}\label{eq28}
\begin{cases}
\varphi_B ={ }^t (0,0,0)  & \text{ on }\partial \Omega,\\
\varphi_A \cdot n_\Gamma = \varphi_B \cdot n_\Gamma = \varphi_S \cdot n_\Gamma & \text{ on }\Gamma (t),\\
P_\Gamma \varphi_A = P_\Gamma \varphi_B = r P_\Gamma \varphi_S & \text{ on }\Gamma (t),
\end{cases}
\end{equation}
and
\begin{equation}\label{eq29}
\begin{cases}
( n_\Omega \cdot \nabla ) \psi_B =0  & \text{ on }\partial \Omega,\\
\psi_A = \psi_B = \psi_S & \text{ on }\Gamma (t).
\end{cases}
\end{equation}
For each variation ($v_A^\varepsilon$, $v_B^\varepsilon$, $v_S^\varepsilon$, $\theta_A^\varepsilon$, $\theta_B^\varepsilon$, $\theta_S^\varepsilon$),
\begin{multline*}
E_D [v_A^\varepsilon, v_B^\varepsilon , v_S^\varepsilon ] := \int_{\Omega_A (t)} \bigg( -\frac{\mu_A}{2} \vert D (v_A^\varepsilon ) \vert^2 - \frac{\lambda_A}{2} \vert {\rm{div}} v_A^\varepsilon \vert^2 \bigg) { \ } dx\\
+ \int_{\Omega_B (t)} \bigg( -\frac{\mu_B}{2} \vert D (v_B^\varepsilon ) \vert^2 - \frac{\lambda_B}{2} \vert {\rm{div}} v_B^\varepsilon \vert^2 \bigg) { \ } dx\\
+ \int_{\Gamma (t)} \bigg( -\frac{\mu_S}{2} \vert D_\Gamma (v_S^\varepsilon ) \vert^2 - \frac{\lambda_S}{2} \vert {\rm{div}}_\Gamma v_S^\varepsilon \vert^2 \bigg) { \ } d \mathcal{H}^2_x,
\end{multline*}
\begin{multline*}
E_W [v_A^\varepsilon , v_B^\varepsilon , v_S^\varepsilon ] := \int_{\Omega_A (t)} ({\rm{div}} v_A^\varepsilon ) \pi_A { \ } d x\\
+ \int_{\Omega_B (t)} ( {\rm{div}} v_B^\varepsilon ) \pi_B { \ } d x + \int_{\Gamma (t)} ( {\rm{div}}_\Gamma v_S^\varepsilon ) \pi_S { \ } d \mathcal{H}^2_x,
\end{multline*}
and
\begin{multline*}
E_{TD}[ \theta_A^\varepsilon , \theta_B^\varepsilon , \theta_S^\varepsilon ] := \int_{\Omega_A (t)} \bigg( - \frac{\kappa_A}{2} \vert {\rm{grad}} \theta_A^\varepsilon \vert^2 \bigg) { \ }d x\\
+ \int_{\Omega_B (t)} \bigg( - \frac{\kappa_B}{2} \vert {\rm{grad}} \theta_B^\varepsilon \vert^2 \bigg) { \ }d x + \int_{\Gamma (t)} \bigg( - \frac{\kappa_S}{2} \vert {\rm{grad}}_\Gamma \theta_S^\varepsilon \vert^2 \bigg) { \ }d \mathcal{H}^2_x.
\end{multline*}
Set $E_{D+W}[\cdot] = E_D[\cdot] + E_W[\cdot]$. We call $E_D$ the \emph{energy dissipation due to viscosities}, $E_W$ the \emph{work done by pressures}, and $E_{TD}$ the \emph{energy dissipation due to thermal diffusion}.

\begin{theorem}[Forces derived from variation of dissipation energies]\label{thm26}
Let $r \in \{ 0, 1\}$, $0<t<T$, and $\mathcal{F}_A , \mathcal{F}_B, \mathcal{F}_S \in [C (\mathbb{R}^3) ]^3$. Assume that for every $\varphi_A , \varphi_B, \varphi_S \in [C^\infty (\mathbb{R}^3) ]^3$ satisfying \eqref{eq28},
\begin{equation*}
\frac{d}{d \varepsilon } \bigg\vert_{\varepsilon =0} E_{D+W} [v_A^\varepsilon, v_B^\varepsilon , v_S^\varepsilon] = \int_{\Omega_A (t)} \mathcal{F}_A \cdot \varphi_A { \ } d x + \int_{\Omega_B (t)} \mathcal{F}_B \cdot \varphi_B { \ } d x + \int_{\Gamma (t)} \mathcal{F}_S \cdot \varphi_S { \ } d \mathcal{H}^2_x.
\end{equation*}
Then
\begin{equation}\label{eq2010}
\begin{cases}
\mathcal{F}_A = {\rm{div}} \mathcal{T}_A (v_A , \pi_A) & \text{ in } \Omega_{A} (t),\\
\mathcal{F}_B = {\rm{div}} \mathcal{T}_B ( v_B , \pi_B) & \text{ in } \Omega_{B} (t),\\
\mathcal{F}_S = {\rm{div}}_\Gamma \mathcal{T}_S (v_S, \pi_S)  + \widetilde{\mathcal{T}}_B (v_B , \pi_B ) n_\Gamma - \widetilde{\mathcal{T}}_A (v_A , \pi_A ) n_\Gamma & \text{ on } \Gamma (t),
\end{cases}
\end{equation}
where $(\mathcal{T}_A, \mathcal{T}_B, \mathcal{T}_S)$ and $( \widetilde{\mathcal{T}}_A, \widetilde{\mathcal{T}}_B)$ are defined by \eqref{eq17} and \eqref{eq18}, respectively.
\end{theorem}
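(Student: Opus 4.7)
The plan is to carry out a standard variational argument: differentiate $E_{D+W}$ in $\varepsilon$ at zero, integrate by parts to shift derivatives off the test fields, and then invoke the fundamental lemma of the calculus of variations together with the coupling constraints \eqref{eq28}. Since $\partial_\varepsilon D(v_A^\varepsilon)|_{\varepsilon=0}=D(\varphi_A)$ and $\partial_\varepsilon({\rm{div}}\,v_A^\varepsilon)|_{\varepsilon=0}={\rm{div}}\,\varphi_A$, with analogous identities for $B$ and for the surface quantities $D_\Gamma$, ${\rm{div}}_\Gamma$, the variation collapses (using $P_\Gamma:D_\Gamma(\varphi_S)={\rm{div}}_\Gamma \varphi_S$ from Remark \ref{rem14} together with the symmetry of the stress tensors) to
\begin{equation*}
\frac{d}{d\varepsilon}\bigg|_{\varepsilon=0} E_{D+W} = -\int_{\Omega_A(t)}\mathcal{T}_A:\nabla\varphi_A\,dx - \int_{\Omega_B(t)}\mathcal{T}_B:\nabla\varphi_B\,dx - \int_{\Gamma(t)}\mathcal{T}_S:\nabla_\Gamma\varphi_S\,d\mathcal{H}^2_x.
\end{equation*}

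Next I would integrate by parts on each piece. In $\Omega_A(t)$ the only boundary is $\Gamma(t)$ with outward normal $n_\Gamma$, producing the bulk term $\int_{\Omega_A(t)}({\rm{div}}\,\mathcal{T}_A)\cdot\varphi_A\,dx$ minus the boundary integral $\int_{\Gamma(t)}(\mathcal{T}_A n_\Gamma)\cdot\varphi_A\,d\mathcal{H}^2_x$. In $\Omega_B(t)$ the outward normal on $\Gamma(t)$ is $-n_\Gamma$ and $\varphi_B$ vanishes on $\partial\Omega$ by \eqref{eq28}, yielding the bulk term plus $+\int_{\Gamma(t)}(\mathcal{T}_B n_\Gamma)\cdot\varphi_B\,d\mathcal{H}^2_x$. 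For the surface contribution I would invoke the closed-surface divergence theorem, which produces $\int_{\Gamma(t)}({\rm{div}}_\Gamma\mathcal{T}_S)\cdot\varphi_S\,d\mathcal{H}^2_x$ with no boundary term since $\Gamma(t)$ is a closed $2$-manifold. Choosing $\varphi_A$ compactly supported in $\Omega_A(t)$ with $\varphi_B=\varphi_S=0$ (which trivially meets \eqref{eq28}) and invoking the fundamental lemma gives $\mathcal{F}_A={\rm{div}}\,\mathcal{T}_A$; an analogous choice yields $\mathcal{F}_B={\rm{div}}\,\mathcal{T}_B$.

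For the surface identity I would feed a prescribed $\varphi_S$ into admissible $\varphi_A,\varphi_B$ that extend $rP_\Gamma\varphi_S+(\varphi_S\cdot n_\Gamma)n_\Gamma$ off $\Gamma(t)$. If $r=1$, then $\varphi_A=\varphi_B=\varphi_S$ on $\Gamma(t)$ and the boundary contributions collapse to $\int_{\Gamma(t)}(\mathcal{T}_B n_\Gamma-\mathcal{T}_A n_\Gamma)\cdot\varphi_S\,d\mathcal{H}^2_x$, matching $\widetilde{\mathcal{T}}_A=\mathcal{T}_A$ and $\widetilde{\mathcal{T}}_B=\mathcal{T}_B$ in \eqref{eq18}. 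If $r=0$, then $\varphi_A=\varphi_B=(\varphi_S\cdot n_\Gamma)n_\Gamma$ on $\Gamma(t)$, and using the identity $n_\Gamma\cdot D(v_A)n_\Gamma=n_\Gamma\cdot(n_\Gamma\cdot\nabla)v_A$ one finds $(\mathcal{T}_A n_\Gamma)\cdot\varphi_A=\widetilde{\mathcal{T}}_A(\varphi_S\cdot n_\Gamma)=(\widetilde{\mathcal{T}}_A n_\Gamma)\cdot\varphi_S$ in the scalar sense of \eqref{eq18}, and likewise for $B$. Since $\varphi_S$ is then essentially arbitrary on $\Gamma(t)$, a final application of the fundamental lemma delivers the third line of \eqref{eq2010}. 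The most delicate step I anticipate is the surface integration by parts for the $\mathcal{T}_S$ term: because $\mathcal{T}_S$ carries the projection $P_\Gamma$ rather than the identity, one must carefully track normal contributions through identities such as ${\rm{div}}_\Gamma(\pi_S P_\Gamma)=\nabla_\Gamma\pi_S+\pi_S H_\Gamma n_\Gamma$ (Remark \ref{rem12}) and its viscous analogues, so that the mean-curvature pieces recombine exactly into ${\rm{div}}_\Gamma\mathcal{T}_S$ without producing spurious leftover terms.
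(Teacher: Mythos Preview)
Your proposal is correct and follows essentially the same route as the paper: differentiate $E_{D+W}$ at $\varepsilon=0$, integrate by parts via Lemma~\ref{lem31}, use the constraints \eqref{eq28} to rewrite the $\Gamma(t)$ boundary contributions in terms of $\varphi_S$ (treating $r=0$ and $r=1$ exactly as you describe), and conclude by the fundamental lemma. Your closing worry about mean-curvature residues in the surface integration by parts dissolves once you note that $\mathcal{T}_S n_\Gamma=0$ (since $D_\Gamma(v_S)$ and $P_\Gamma$ both annihilate $n_\Gamma$), which kills the $H_\Gamma$ term in Lemma~\ref{lem31} outright.
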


\begin{theorem}[Endothermic energies derived from thermal diffusion]\label{thm27}
Let $0<t<T$, and $Q_A, Q_B,Q_S \in C (\mathbb{R}^3)$. Assume that for every $\psi_A, \psi_B, \psi_S \in C^\infty (\mathbb{R}^3)$ satisfying \eqref{eq29},
\begin{equation*}
\frac{d}{d \varepsilon} \bigg\vert_{\varepsilon = 0} E_{TD} [\theta_A^\varepsilon , \theta_B^\varepsilon, \theta_S^\varepsilon]\\
= \int_{\Omega_A (t)} Q_A \psi_A { \ }d x + \int_{\Omega_B (t)} Q_B \psi_B { \ }d x + \int_{\Gamma (t)} Q_S \psi_S { \ }d \mathcal{H}^2_x. 
\end{equation*}
Then
\begin{equation}\label{eq2011}
\begin{cases}
Q_A = {\rm{div}}( \kappa_A \nabla \theta_A) & \text{ in } \Omega_{A} (t),\\
Q_B = {\rm{div}}( \kappa_B \nabla \theta_B) & \text{ in } \Omega_{B}(t),\\
Q_S = {\rm{div}}_\Gamma ( \kappa_S \nabla_\Gamma \theta_S )  + \kappa_B ( n_\Gamma \cdot \nabla ) \theta_B - \kappa_A ( n_\Gamma \cdot \nabla ) \theta_A& \text{ on } \Gamma (t).
\end{cases}
\end{equation}
\end{theorem}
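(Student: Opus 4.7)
The plan is to compute the first variation of $E_{TD}$ directly, apply integration by parts on each region, and then extract the three equations by exploiting the freedom of the test functions.

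First I would differentiate under the integral sign at $\varepsilon=0$. Because the domains $\Omega_A(t)$, $\Omega_B(t)$, $\Gamma(t)$ are held fixed (only $\theta_\sharp$ is varied), one obtains
\begin{multline*}
\frac{d}{d\varepsilon}\bigg|_{\varepsilon=0}E_{TD}[\theta_A^\varepsilon,\theta_B^\varepsilon,\theta_S^\varepsilon]
=-\int_{\Omega_A(t)}\kappa_A\,{\rm{grad}}\,\theta_A\cdot{\rm{grad}}\,\psi_A\,dx\\
-\int_{\Omega_B(t)}\kappa_B\,{\rm{grad}}\,\theta_B\cdot{\rm{grad}}\,\psi_B\,dx
-\int_{\Gamma(t)}\kappa_S\,{\rm{grad}}_\Gamma\,\theta_S\cdot{\rm{grad}}_\Gamma\,\psi_S\,d\mathcal{H}^2_x.
\end{multline*}

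Next, I would integrate by parts in each piece. In $\Omega_A(t)$ the outer normal on $\partial\Omega_A(t)=\Gamma(t)$ is $n_\Gamma$, so the classical divergence theorem yields
\[
-\int_{\Omega_A(t)}\kappa_A\,{\rm{grad}}\,\theta_A\cdot{\rm{grad}}\,\psi_A\,dx=\int_{\Omega_A(t)}{\rm{div}}(\kappa_A\,{\rm{grad}}\,\theta_A)\,\psi_A\,dx-\int_{\Gamma(t)}\kappa_A(n_\Gamma\cdot\nabla)\theta_A\,\psi_A\,d\mathcal{H}^2_x.
\]
In $\Omega_B(t)$ the boundary is $\partial\Omega\cup\Gamma(t)$ with outer normals $n_\Omega$ and $-n_\Gamma$ respectively. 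The $\partial\Omega$ contribution vanishes by the Neumann condition $(n_\Omega\cdot\nabla)\theta_B=0$ in \eqref{eq11}, leaving
\[
-\int_{\Omega_B(t)}\kappa_B\,{\rm{grad}}\,\theta_B\cdot{\rm{grad}}\,\psi_B\,dx=\int_{\Omega_B(t)}{\rm{div}}(\kappa_B\,{\rm{grad}}\,\theta_B)\,\psi_B\,dx+\int_{\Gamma(t)}\kappa_B(n_\Gamma\cdot\nabla)\theta_B\,\psi_B\,d\mathcal{H}^2_x.
\]
For the surface term, since $\Gamma(t)$ is a closed $2$-manifold and $\kappa_S\,{\rm{grad}}_\Gamma\,\theta_S$ is tangential (because $P_\Gamma\nabla=\nabla_\Gamma$), the surface divergence theorem produces no boundary contribution and gives
\[
-\int_{\Gamma(t)}\kappa_S\,{\rm{grad}}_\Gamma\,\theta_S\cdot{\rm{grad}}_\Gamma\,\psi_S\,d\mathcal{H}^2_x=\int_{\Gamma(t)}{\rm{div}}_\Gamma(\kappa_S\,{\rm{grad}}_\Gamma\,\theta_S)\,\psi_S\,d\mathcal{H}^2_x.
\]

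Using the trace identity $\psi_A=\psi_B=\psi_S$ on $\Gamma(t)$ from \eqref{eq29}, the three surface contributions combine into the integrand $\bigl[{\rm{div}}_\Gamma(\kappa_S\,{\rm{grad}}_\Gamma\,\theta_S)+\kappa_B(n_\Gamma\cdot\nabla)\theta_B-\kappa_A(n_\Gamma\cdot\nabla)\theta_A\bigr]\psi_S$, which already matches the desired formula for $Q_S$. I would conclude with a standard localization: first test with $\psi_A\in C_c^\infty(\Omega_A(t))$ extended by zero and $\psi_B=\psi_S=0$ (which trivially satisfies \eqref{eq29}), extracting $Q_A={\rm{div}}(\kappa_A\,{\rm{grad}}\,\theta_A)$ in $\Omega_A(t)$ by the fundamental lemma of the calculus of variations; the symmetric choice gives $Q_B$. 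Subtracting the now-known interior contributions from the original variational identity leaves an integral equation on $\Gamma(t)$ alone, and letting $\psi_S$ range over smooth functions on $\Gamma(t)$ yields the surface equation.

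The step I expect to require the most care is the last one: the admissible triples $(\psi_A,\psi_B,\psi_S)$ are coupled by the trace condition on $\Gamma(t)$ and the Neumann condition on $\partial\Omega$, so to reduce the surface identity to pointwise equality on $\Gamma(t)$ I must exhibit, for each prescribed smooth $\psi_S$ on $\Gamma(t)$, compatible extensions $\psi_A,\psi_B\in C^\infty(\mathbb{R}^3)$ with $\psi_A|_{\Gamma(t)}=\psi_B|_{\Gamma(t)}=\psi_S$ and $(n_\Omega\cdot\nabla)\psi_B=0$ on $\partial\Omega$. Because $\Gamma(t)$ and $\partial\Omega$ are disjoint smooth hypersurfaces, a tubular-neighborhood construction combined with cutoff functions supplies such extensions, so the argument closes.
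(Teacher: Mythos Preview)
Your proposal is correct and follows essentially the same route as the paper's proof: compute the first variation, integrate by parts in $\Omega_A(t)$, $\Omega_B(t)$, and on $\Gamma(t)$ using Lemma~\ref{lem31}, kill the $\partial\Omega$ term via the Neumann condition on $\theta_B$, combine the $\Gamma(t)$ terms via $\psi_A=\psi_B=\psi_S$, and invoke the fundamental lemma of the calculus of variations. Your write-up is in fact more explicit than the paper's about the localization step (choosing compactly supported $\psi_A$, etc.) and about the extension issue needed to isolate the surface equation; the paper simply asserts the conclusion from the fundamental lemma without spelling this out.
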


Finally, we state the conservation laws and thermodynamic potential for our multiphase flow system.
\begin{theorem}[Conservative forms and conservation laws]\label{thm28}Let $r \in \{ 0,1 \}$. Then{ \ }\\
$(\rm{i})$ Any solution to system \eqref{eq12}-\eqref{eq14} satisfies \eqref{eq19}-\eqref{eq1011}.\\
$(\rm{ii})$ Any solution to system \eqref{eq11}-\eqref{eq14} satisfies \eqref{eq1012}-\eqref{eq1014}.\\
$(\rm{iii})$ Assume that $r=1$, and that for $0<t <T$
\begin{equation}\label{eq2022}
\int_{\partial \Omega} \{ \mu_B D (v_B) n_\Omega + \lambda_B ({\rm{div}} v_B) n_\Omega -\pi_B n_\Omega \} { \ } d \mathcal{H}^2_x = { }^t ( 0 , 0 ,0 ).
\end{equation}
Then any solution to \eqref{eq11}-\eqref{eq14} satisfies \eqref{eq1015}.
\end{theorem}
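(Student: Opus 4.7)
The three parts are handled in turn, using throughout the transport theorems of Definition~\ref{def21}, the continuity equations~\eqref{eq12}, the identities in Remark~\ref{rem13}, and the algebraic identities in Remark~\ref{rem14}(i).

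For (i), in the bulk the identity $\partial_t\rho_\sharp+\mathrm{div}(\rho_\sharp v_\sharp)=D_t^\sharp\rho_\sharp+(\mathrm{div}\,v_\sharp)\rho_\sharp$ together with~\eqref{eq12} gives the mass conservative forms; the surface version comes from the identities in Remark~\ref{rem13} applied with $f\equiv 1$ and with $f=v_S^i$. For the total-energy equation~\eqref{eq1010} I dot~\eqref{eq14} with $v_\sharp$, add~\eqref{eq13}, and apply $\mathrm{div}(\mathcal{T}_\sharp v_\sharp)=(\mathrm{div}\,\mathcal{T}_\sharp)\cdot v_\sharp+\mathcal{T}_\sharp:D(v_\sharp)$ together with $\mathcal{T}_\sharp:D(v_\sharp)=e_{D_\sharp}-(\mathrm{div}\,v_\sharp)\pi_\sharp$; the same computation on $\Gamma$ (using $\mathcal{T}_S : \nabla_\Gamma v_S = \mathcal{T}_S : D_\Gamma(v_S)$, since $\mathcal{T}_S = P_\Gamma \mathcal{T}_S P_\Gamma$) yields~\eqref{eq1010} with the source $\mathcal{E}_S$ arising from the $q_B-q_A$ jump in~\eqref{eq13} and the $(\widetilde{\mathcal{T}}_B-\widetilde{\mathcal{T}}_A)n_\Gamma\cdot v_S$ contribution from the dotted surface momentum equation.

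For (ii), each integral identity is obtained by integrating a conservative form from (i) and applying~\eqref{eq21}--\eqref{eq23}. The mass law~\eqref{eq1012} is immediate since the RHS of~\eqref{eq12} vanishes. For the total-energy law~\eqref{eq1013} I integrate~\eqref{eq1010}, apply the divergence theorem in the bulk and on the closed surface $\Gamma$, and check that the flux contributions telescope at $\Gamma$ and vanish at $\partial\Omega$: the $q_B\cdot n_\Omega$ term is killed by $(n_\Omega\cdot\nabla)\theta_B=0$ and $\mathcal{T}_Bv_B\cdot n_\Omega$ by $v_B=0$; on $\Gamma$ the heat-flux jumps cancel directly, and the viscous fluxes cancel after noting that in both cases $r\in\{0,1\}$ one has $\mathcal{T}_\sharp v_\sharp\cdot n_\Gamma = \widetilde{\mathcal{T}}_\sharp n_\Gamma\cdot v_S$ (for $r=0$ this uses $v_\sharp=(v_S\cdot n_\Gamma)n_\Gamma$ on $\Gamma$ and $D(v_\sharp)n_\Gamma\cdot n_\Gamma=n_\Gamma\cdot(n_\Gamma\cdot\nabla)v_\sharp$; for $r=1$ simply $v_\sharp=v_S$ on $\Gamma$ with $\widetilde{\mathcal{T}}_\sharp=\mathcal{T}_\sharp$). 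The energy law~\eqref{eq1014} is obtained analogously by dotting~\eqref{eq14} with $v_\sharp$ and integrating, using the same boundary cancellations together with $\int_\Gamma \mathrm{div}_\Gamma(\mathcal{T}_S v_S)\,d\mathcal{H}^2_x=0$, which holds because $\mathcal{T}_Sv_S\cdot n_\Gamma=0$ on the closed surface.

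For (iii) I integrate~\eqref{eq14} region by region: the transport theorem combined with the mass equation~\eqref{eq12} gives $\int\rho_\sharp D_t^\sharp v_\sharp=\frac{d}{dt}\int\rho_\sharp v_\sharp$, while the divergence theorem yields $\int_\Gamma\mathcal{T}_An_\Gamma$ from $\Omega_A$ and $\int_{\partial\Omega}\mathcal{T}_Bn_\Omega-\int_\Gamma\mathcal{T}_Bn_\Gamma$ from $\Omega_B$. A key observation is that $\int_\Gamma\mathrm{div}_\Gamma\mathcal{T}_S\,d\mathcal{H}^2_x=\vec 0$: because $D_\Gamma(v_S)=P_\Gamma(\cdot)P_\Gamma$ and $P_\Gamma n_\Gamma=0$, one has $\mathcal{T}_Sn_\Gamma=\vec 0$, so each row of $\mathcal{T}_S$ is a tangential field on the closed $2$-manifold $\Gamma$ and Stokes' theorem kills the integral. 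The main obstacle, and the reason the hypothesis $r=1$ is imposed, is matching the two-sided stress at $\Gamma$: definition~\eqref{eq18} with $r=1$ gives $\widetilde{\mathcal{T}}_\sharp=\mathcal{T}_\sharp$, so the surface source $\int_\Gamma(\widetilde{\mathcal{T}}_B-\widetilde{\mathcal{T}}_A)n_\Gamma$ cancels $\int_\Gamma\mathcal{T}_An_\Gamma-\int_\Gamma\mathcal{T}_Bn_\Gamma$ exactly. The only surviving boundary integral $\int_{\partial\Omega}\mathcal{T}_Bn_\Omega\,d\mathcal{H}^2_x$ vanishes by~\eqref{eq2022}, so the total momentum is time-independent, and~\eqref{eq1015} follows by integrating from $t_1$ to $t_2$.
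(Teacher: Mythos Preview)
Your proof is correct and follows essentially the same route as the paper's: transport theorems (Definition~\ref{def21}) combined with the integration-by-parts formulas of Lemma~\ref{lem31} and the boundary conditions~\eqref{eq11} to produce the needed cancellations at $\Gamma(t)$ and $\partial\Omega$. Your treatment is in fact more explicit than the paper's in two places---the paper dispatches part~(i) with ``direct calculations give (i)'' and obtains~\eqref{eq1013} by pointing back to the computation in Section~\ref{sect4}---whereas you spell out the identity $\mathrm{div}(\mathcal{T}_\sharp v_\sharp)=(\mathrm{div}\,\mathcal{T}_\sharp)\cdot v_\sharp+\mathcal{T}_\sharp:D(v_\sharp)$ and the case analysis $r\in\{0,1\}$ showing $\mathcal{T}_\sharp v_\sharp\cdot n_\Gamma=\widetilde{\mathcal{T}}_\sharp n_\Gamma\cdot v_S$ on $\Gamma$; but these are exactly the ingredients behind the paper's references.
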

\begin{theorem}[Thermodynamic potential]\label{thm29} Let $\sharp = A , B ,S$, and $r \in \{ 0 , 1 \}$. Suppose that $\rho_\sharp$ and $\theta_\sharp$ are positive functions. Set $h_\sharp =e_\sharp + \pi_\sharp /\rho_\sharp$, $F_\sharp^H = e_\sharp -\theta_\sharp \varsigma_\sharp$, $F_\sharp^G = h_\sharp - \theta_\sharp \varsigma_\sharp$. Assume that $e_\sharp$ satisfies $D_t^\sharp e_\sharp = \theta_\sharp D_t^\sharp \varsigma_\sharp - \pi_\sharp D_t^\sharp (1/\rho_\sharp)$. Then \eqref{eq1016}-\eqref{eq1019} hold.
\end{theorem}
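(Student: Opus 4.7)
All four sets of identities have the same skeleton: compute $\rho_\sharp D_t^\sharp$ of the thermodynamic potential from its definition, substitute \eqref{eq13} for $\rho_\sharp D_t^\sharp e_\sharp$ where needed together with the assumed Gibbs identity $D_t^\sharp e_\sharp = \theta_\sharp D_t^\sharp \varsigma_\sharp - \pi_\sharp D_t^\sharp(1/\rho_\sharp)$, and then translate the material form to the conservative form. I would treat the two bulk equations $(\sharp=A,B)$ and the surface equation $(\sharp=S)$ in parallel, the only differences being that $\operatorname{div}$ is replaced by $\operatorname{div}_\Gamma$ on $\Gamma_T$ and that the interface jump $q_B\cdot n_\Gamma-q_A\cdot n_\Gamma$ inherited from \eqref{eq13} is carried along. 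As preliminaries I would record the continuity consequences $\rho_\sharp D_t^\sharp(1/\rho_\sharp)=\operatorname{div}v_\sharp$ (resp.\ $\operatorname{div}_\Gamma v_S$ on $\Gamma_T$), which follow from \eqref{eq12} by the chain rule, together with the standard identity $\partial_t(\rho_\sharp f)+\operatorname{div}(\rho_\sharp f v_\sharp)=\rho_\sharp D_t^\sharp f$ for $\sharp=A,B$ and its surface counterpart $D_t^N(\rho_S f)+\operatorname{div}_\Gamma(\rho_S f v_S)=\rho_S D_t^S f$ from Remark \ref{rem13}.

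For \eqref{eq1016} I would apply $\rho_\sharp D_t^\sharp$ to $h_\sharp=e_\sharp+\pi_\sharp/\rho_\sharp$. The quotient rule together with the continuity preliminary gives $\rho_\sharp D_t^\sharp(\pi_\sharp/\rho_\sharp)=D_t^\sharp\pi_\sharp+\pi_\sharp\operatorname{div}v_\sharp$; substituting \eqref{eq13} for $\rho_\sharp D_t^\sharp e_\sharp$ cancels the pressure-work term $(\operatorname{div}v_\sharp)\pi_\sharp$ and leaves $\operatorname{div}q_\sharp+e_{D_\sharp}+D_t^\sharp\pi_\sharp$ on the right (plus the jump $q_B\cdot n_\Gamma-q_A\cdot n_\Gamma$ when $\sharp=S$). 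Converting $\rho_\sharp D_t^\sharp h_\sharp$ to the conservative form via the identities above yields \eqref{eq1016}. For \eqref{eq1017} the Gibbs identity, multiplied by $\rho_\sharp$, gives $\rho_\sharp\theta_\sharp D_t^\sharp\varsigma_\sharp=\rho_\sharp D_t^\sharp e_\sharp+\pi_\sharp\operatorname{div}v_\sharp$; substituting \eqref{eq13} reduces the right side to $\operatorname{div}q_\sharp+e_{D_\sharp}$ (plus the jump for $S$). Dividing by $\theta_\sharp>0$ and expanding $\theta_\sharp^{-1}\operatorname{div}q_\sharp=\operatorname{div}(q_\sharp/\theta_\sharp)+q_\sharp\cdot\nabla\theta_\sharp/\theta_\sharp^2$ (with its tangential analogue on $\Gamma$), then passing to conservative form, delivers \eqref{eq1017}.

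For \eqref{eq1018} I would differentiate $F_\sharp^H=e_\sharp-\theta_\sharp\varsigma_\sharp$ along $D_t^\sharp$, collect the combination $D_t^\sharp e_\sharp-\theta_\sharp D_t^\sharp\varsigma_\sharp$, and replace it by $-\pi_\sharp D_t^\sharp(1/\rho_\sharp)$ via the Gibbs identity; multiplying by $\rho_\sharp$ and using the continuity preliminary closes \eqref{eq1018}. Then \eqref{eq1019} follows from $F_\sharp^G=F_\sharp^H+\pi_\sharp/\rho_\sharp$, \eqref{eq1018}, and the quotient identity from the first paragraph. The only real obstacle is bookkeeping on $\Gamma_T$: one must ensure that the material-to-conservative translation uses $D_t^N$ rather than $\partial_t$, as specified in Remark \ref{rem13}, and that the heat-flux jump $q_B\cdot n_\Gamma-q_A\cdot n_\Gamma$ from the surface $e_S$-equation is propagated correctly through \eqref{eq1016} and \eqref{eq1017}. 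No such jump enters \eqref{eq1018} or \eqref{eq1019}, since those derivations never invoke \eqref{eq13}.
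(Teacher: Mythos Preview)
Your proposal is correct and follows essentially the same route as the paper's proof: compute $\rho_\sharp D_t^\sharp$ of each potential from its definition, feed in \eqref{eq13} and the Gibbs identity, use the continuity consequence $\rho_\sharp D_t^\sharp(1/\rho_\sharp)=\operatorname{div} v_\sharp$ (resp.\ $\operatorname{div}_\Gamma v_S$), and translate to conservative form via Remark~\ref{rem13}. The only cosmetic difference is that the paper derives \eqref{eq1019} directly from $F_\sharp^G=h_\sharp-\theta_\sharp\varsigma_\sharp$ rather than via $F_\sharp^G=F_\sharp^H+\pi_\sharp/\rho_\sharp$, which is an equivalent one-line rearrangement.
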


We prove Theorems \ref{thm24}, \ref{thm26}, \ref{thm27} in Section \ref{sect3}, Theorem \ref{thm28} in Section \ref{sect5}, and Theorem \ref{thm29} in Section \ref{sect6}. In Section \ref{sect4} we derive system \eqref{eq12}-\eqref{eq14} by our thermodynamic approaches.

\section{Application of Transport Theorems and Integration by Parts}\label{sect3}

We apply the transport theorems (Definition \ref{def21}) and several formulas for integration by parts (Lemma \ref{lem31}) to prove Theorems \ref{thm24}, \ref{thm26}, and \ref{thm27}.

\begin{lemma}[Formulas for integration by parts]\label{lem31}{ \ }Fix $0 \leq t < T$ and $j=1,2,3$. Then for every $f,g \in C^1 (\mathbb{R}^3)$,
\begin{equation*}
\int_{\Omega_A (t)} (\partial_j f ) g { \ } d x = - \int_{\Omega_A (t)} f (\partial_j g ) { \ } d x + \int_{\Gamma (t)} f  g n_j^\Gamma { \ } d \mathcal{H}^2_x,
\end{equation*}
\begin{equation*}
\int_{\Omega_B (t)} (\partial_j f ) g { \ } d x = - \int_{\Omega_B (t)} f (\partial_j g ) { \ } d x + \int_{\partial \Omega} f  g n_j^\Omega { \ } d \mathcal{H}^2_x - \int_{\Gamma (t)} f  g n_j^\Gamma { \ } d \mathcal{H}^2_x,
\end{equation*}
\begin{equation*}
\int_{\Gamma (t)} (\partial^\Gamma_j f ) g { \ } d \mathcal{H}^2_x = - \int_{\Gamma (t)} f (\partial^\Gamma_j g ) { \ } d \mathcal{H}^2_x - \int_{\Gamma (t)} H_\Gamma f  g n_j^\Gamma { \ } d \mathcal{H}^2_x,
\end{equation*}
where $H_\Gamma = - {\rm{div}}_\Gamma n_\Gamma$ and $\partial^\Gamma_j f = \partial_j f - n_j^\Gamma (n_\Gamma \cdot \nabla )f$. Here $n_\Gamma = n_\Gamma (x,t) = { }^t (n_1^\Gamma , n_2^\Gamma , n_3^\Gamma) $ denotes the unit outer normal vector at $x \in \Gamma (t)$ and $n_\Omega = n_\Omega (x) = { }^t (n_1^\Omega , n_2^\Omega , n_3^\Omega)$ the unit outer normal vector at $x \in \partial \Omega$ (see Figure \ref{Fig1}). 
\end{lemma}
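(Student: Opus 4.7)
The plan is to treat each of the three identities separately, since they correspond to different geometric settings: a bounded subdomain $\Omega_A(t)$ with smooth boundary $\Gamma(t)$; an outer shell $\Omega_B(t)$ with two boundary components $\partial\Omega$ and $\Gamma(t)$; and the closed evolving surface $\Gamma(t)$ itself. Time plays no role in any of them (the identities are stated for fixed $t$), so the proof is purely a matter of divergence theorems in $\mathbb{R}^3$ and on a closed $2$-manifold, combined with the Leibniz rule.

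For the first identity, I would apply the classical Gauss–Green theorem to the $C^1$ vector field $F = (fg)\,e_j$ on $\Omega_A(t)$, whose unit outer normal on the sole boundary component $\Gamma(t)$ is $n_\Gamma$ by definition. This yields
\begin{equation*}
\int_{\Omega_A(t)} \partial_j(fg)\,dx = \int_{\Gamma(t)} fg\, n_j^\Gamma \, d\mathcal{H}^2_x,
\end{equation*}
and the Leibniz rule $\partial_j(fg) = (\partial_j f)g + f(\partial_j g)$ rearranges this into the stated form. The second identity is obtained in the same way on $\Omega_B(t) = \Omega\setminus\overline{\Omega_A(t)}$, whose boundary has two pieces: the outer part $\partial\Omega$ with outer normal $n_\Omega$, and the inner part $\Gamma(t)$ whose outer normal (with respect to $\Omega_B$) is $-n_\Gamma$ because $n_\Gamma$ points out of $\Omega_A$ into $\Omega_B$. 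The sign flip produces the minus in front of the $\Gamma(t)$ boundary integral.

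The third identity is the genuinely nontrivial one. The strategy is to derive, for any $h \in C^1(\mathbb{R}^3)$, the key identity
\begin{equation*}
\int_{\Gamma(t)} \partial_j^\Gamma h \, d\mathcal{H}^2_x = -\int_{\Gamma(t)} H_\Gamma\, h\, n_j^\Gamma \, d\mathcal{H}^2_x,
\end{equation*}
and then substitute $h = fg$ together with the tangential Leibniz rule $\partial_j^\Gamma(fg) = (\partial_j^\Gamma f) g + f\, \partial_j^\Gamma g$. To prove the key identity I would apply the surface divergence theorem on the closed $2$-manifold $\Gamma(t)$, namely
\begin{equation*}
\int_{\Gamma(t)} \mathrm{div}_\Gamma V \, d\mathcal{H}^2_x = -\int_{\Gamma(t)} H_\Gamma\,(V\cdot n_\Gamma) \, d\mathcal{H}^2_x
\end{equation*}
for $V \in [C^1(\mathbb{R}^3)]^3$, with $V = h\,e_j$; this gives $\mathrm{div}_\Gamma V = \partial_j^\Gamma h$ and $V\cdot n_\Gamma = h\, n_j^\Gamma$.

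The main obstacle is establishing the surface divergence theorem above, which is where the mean curvature term enters and where the defining relation $H_\Gamma = -\mathrm{div}_\Gamma n_\Gamma$ is actually used. The cleanest way is to split $V = P_\Gamma V + (V\cdot n_\Gamma) n_\Gamma$: the tangential part $P_\Gamma V$ contributes nothing, by Stokes' theorem on a closed manifold (no boundary), while for the normal part one computes $\mathrm{div}_\Gamma\bigl((V\cdot n_\Gamma)n_\Gamma\bigr) = \nabla_\Gamma(V\cdot n_\Gamma)\cdot n_\Gamma + (V\cdot n_\Gamma)\,\mathrm{div}_\Gamma n_\Gamma = -H_\Gamma(V\cdot n_\Gamma)$, using $P_\Gamma n_\Gamma = 0$ and the definition of $H_\Gamma$. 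Alternatively, one may simply cite this identity from the literature on evolving closed surfaces (e.g. \cite{DE07,PS16,KLG17}). Once the surface divergence theorem is in place, the rest is a one-line computation.
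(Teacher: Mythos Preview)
Your proposal is correct and follows essentially the same approach as the paper: the paper's proof is a one-line remark that Lemma~\ref{lem31} follows from the Gauss divergence theorem (for the $\Omega_A(t)$ and $\Omega_B(t)$ identities) and the surface divergence theorem (for the $\Gamma(t)$ identity), citing \cite{Sim83} and \cite{K19}. Your write-up simply unpacks these two ingredients, including the sign flip on $\Gamma(t)$ for $\Omega_B(t)$ and the tangential/normal decomposition behind the surface divergence theorem, so there is no substantive difference.
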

\noindent Applying the Gauss divergence theorem and the surface divergence theorem (Section 9 in \cite{Sim83}, Theorem 2.3 in \cite{K19}), we can prove Lemma \ref{lem31}.

We now make use of the transport theorems to prove Theorem \ref{thm24}.
\begin{proof}[Proof of Theorem \ref{thm24}]
We only prove $(\rm{ii})$ since the proof of $(\rm{i})$ is similar. We assume that $(\rho_A,\rho_B,\rho_S)$ satisfies \eqref{eq12}. Let $\widetilde{Q}_A, \widetilde{Q}_B, \widetilde{Q}_S \in C (\mathbb{R}^4)$ and $p_A,p_B,p_S \in C^1 (\mathbb{R})$. We first show that for every $0 < t <T$ and $\Lambda \subset \Omega$,
\begin{align}
\frac{d}{d t} \int_{\Omega_A (t) \cap \Lambda} \{ \rho_A e_A  - p_A (\rho_A ) \} { \ }d x = \int_{\Omega_A (t) \cap \Lambda} \{ \rho_A D_t^A e_A  + ({\rm{div}} v_A) \Pi_A \} { \ } d x,\label{eq31}\\
\frac{d}{d t} \int_{\Omega_B (t) \cap \Lambda} \{ \rho_B e_B  - p_B (\rho_B ) \} { \ }d x = \int_{\Omega_B (t) \cap \Lambda} \{ \rho_B D_t^B e_B  +  ({\rm{div}} v_B) \Pi_B \} { \ } d x,\label{eq32}\\
\frac{d}{dt} \int_{\Gamma (t) \cap \Lambda} \{ \rho_S e_S  - p_S (\rho_S ) \} { \ }d \mathcal{H}^2_x = \int_{\Gamma (t) \cap \Lambda} \{ \rho_S D_t^S e_S  + ({\rm{div}}_\Gamma v_S) \Pi_S \} { \ } d \mathcal{H}^2_x,\label{eq33}
\end{align}
where $(\Pi_A, \Pi_B, \Pi_S)$ is defined by \eqref{eq26}. We only derive \eqref{eq33}. Using the surface transport theorem \eqref{eq23} and \eqref{eq12}, we check that for $0 < t <T$ and $\Lambda \subset \Omega$
\begin{multline*}
\frac{d}{dt} \int_{\Gamma (t) \cap \Lambda} \{ \rho_S e_S  - p_S (\rho_S ) \} { \ }d \mathcal{H}^2_x = \int_{\Gamma (t) \cap \Lambda} \{ (D_t^S \rho_S ) e_S + \rho_S D_t e_S + \rho_S e_S ({\rm{div}}_\Gamma v_S) \} { \ }d \mathcal{H}^2_x\\
 + \int_{\Gamma (t) \cap \Lambda} \{ - D_t^S \rho_S p'_S (\rho_S) - p_S (\rho_S ) ({\rm{div}}_\Gamma v_S) \}{ \ } d \mathcal{H}^2_x\\
= \int_{\Gamma (t) \cap \Lambda}  \{ \rho_S D_t^S e_S  + (\rho_S p'_S(\rho_S) - p_S(\rho_S) ) ({\rm{div}}_\Gamma v_S) \} { \ } d \mathcal{H}^2_x.
\end{multline*}
Thus, we see \eqref{eq33}. We assume that for every $0 < t <T$ and $\Lambda \subset \Omega$,
\begin{align*}
\frac{d}{d t} \int_{\Omega_A (t) \cap \Lambda} \{ \rho_A e_A  - p_A (\rho_A ) \} { \ }d x = \int_{\Omega_A (t) \cap \Lambda} \widetilde{Q}_A { \ } d x,\\
\frac{d}{d t} \int_{\Omega_B (t) \cap \Lambda} \{ \rho_B e_B  - p_B (\rho_B ) \} { \ }d x = \int_{\Omega_B (t) \cap \Lambda} \widetilde{Q}_B { \ } d x,\\
\frac{d}{dt} \int_{\Gamma (t) \cap \Lambda} \{ \rho_S e_S  - p_S (\rho_S ) \} { \ }d \mathcal{H}^2_x = \int_{\Gamma (t) \cap \Lambda} \widetilde{Q}_S { \ } d \mathcal{H}^2_x.
\end{align*}
Applying \eqref{eq31}-\eqref{eq33}, we have \eqref{eq25}. Therefore, Theorem \ref{thm24} is proved.
\end{proof}

Let us apply integration by parts to prove Theorems \ref{thm26} and \ref{thm27}.
\begin{proof}[Proof of Theorem \ref{thm26}]
Let $r \in \{ 0, 1\}$ and $0<t<T$. Let $\varphi_A , \varphi_B, \varphi_S \in [C^\infty (\mathbb{R}^3) ]^3$ satisfying \eqref{eq28}. A direct calculation gives
\begin{multline*}
\frac{d}{d \varepsilon } \bigg\vert_{\varepsilon =0} E_{D} [v_A^\varepsilon , v_B^\varepsilon , v_S^\varepsilon ] = - \int_{\Omega_A (t)} \{ \mu_A D (v_A): D (\varphi_A) + \lambda_A ({\rm{div}} v_A)({\rm{div}} \varphi_A) \}{ \ } d x\\
- \int_{\Omega_B (t)} \{ \mu_B D (v_B): D (\varphi_B) + \lambda_B ({\rm{div}} v_B)({\rm{div}} \varphi_B) \}{ \ } d x\\ 
- \int_{\Gamma (t)} \{ \mu_S D_\Gamma (v_S): D_\Gamma (\varphi_S) + \lambda_S ({\rm{div}}_\Gamma v_S)({\rm{div}}_\Gamma \varphi_S) \}{ \ } d \mathcal{H}^2_x,
\end{multline*}
and
\begin{multline*}
\frac{d}{d \varepsilon } \bigg\vert_{\varepsilon =0} E_{W} [v_A^\varepsilon , v_B^\varepsilon  , v_S^\varepsilon ]\\ = \int_{\Omega_A (t)} ({\rm{div}} \varphi_A) \pi_A { \ }d x + \int_{\Omega_B (t)} ({\rm{div}} \varphi_B) \pi_B { \ }d x +\int_{\Gamma (t)} ({\rm{div}}_\Gamma \varphi_S) \pi_S { \ }d \mathcal{H}^2_x.
\end{multline*}
Using integration by parts (Lemma \ref{lem31}), we find that
\begin{multline}\label{eq34}
\frac{d}{d \varepsilon } \bigg\vert_{\varepsilon =0} E_{D+W} [v_A^\varepsilon , v_B^\varepsilon , v_S^\varepsilon ] = \int_{\Omega_A (t)}{\rm{div}} \mathcal{T}_A (v_A , \pi_A ) \cdot \varphi_A { \ } d x\\
+ \int_{\Omega_B (t)}{\rm{div}} \mathcal{T}_B (v_B , \pi_B ) \cdot \varphi_B { \ } d x + \int_{\Gamma (t)}{\rm{div}}_\Gamma \mathcal{T}_S (v_S , \pi_S ) \cdot \varphi_S { \ } d \mathcal{H}^2_x\\
- \int_{\Gamma (t)} \{ \mathcal{T}_A (v_A, \pi_A) n_\Gamma \} \cdot \varphi_A { \ } d \mathcal{H}^2_x + \int_{\Gamma (t)} \{ \mathcal{T}_B (v_B , \pi_B) n_\Gamma \} \cdot \varphi_B { \ } d \mathcal{H}^2_x\\
- \int_{\partial \Omega} \{ \mathcal{T}_B ( v_B , \pi_B ) n_\Omega \} \cdot \varphi_B { \ } d \mathcal{H}^2_x,
\end{multline}
where $(\mathcal{T}_A, \mathcal{T}_B, \mathcal{T}_S)$ and $( \widetilde{\mathcal{T}}_A, \widetilde{\mathcal{T}}_B)$ are defined by \eqref{eq17} and \eqref{eq18}. Applying \eqref{eq28}, we check that
\begin{multline*}
\text{(R.H.S) of }\eqref{eq34} = \int_{\Omega_A (t)}{\rm{div}} \mathcal{T}_A (v_A , \pi_A ) \cdot \varphi_A { \ } d x + \int_{\Omega_B (t)}{\rm{div}} \mathcal{T}_B (v_B , \pi_B ) \cdot \varphi_B { \ } d x\\
 + \int_{\Gamma (t)} \{ {\rm{div}}_\Gamma \mathcal{T}_S (v_S , \pi_S ) + \widetilde{\mathcal{T}}_B (v_B , \pi_B ) n_\Gamma - \widetilde{\mathcal{T}}_A (v_A , \pi_A ) n_\Gamma \} \cdot \varphi_S { \ } d \mathcal{H}^2_x.
\end{multline*}
Here we used the facts that
\begin{align*}
D (v_A ) n_\Gamma \cdot \varphi_A = (n_\Gamma \cdot \{ (n_\Gamma \cdot \nabla) v_A \} ) (n_\Gamma \cdot \varphi_S ) \text{ if } r=0,\\
D (v_B ) n_\Gamma \cdot \varphi_B = (n_\Gamma \cdot \{ (n_\Gamma \cdot \nabla) v_B \}) (n_\Gamma \cdot \varphi_S ) \text{ if } r=0.
\end{align*}
From fundamental lemma of calculation of variations, we have \eqref{eq2010}. Therefore, Theorem \ref{thm26} is proved.
\end{proof}

\begin{proof}[Proof of Theorem \ref{thm27}]
Fix $0<t<T$. Let $\psi_A, \psi_B, \psi_S \in C^\infty (\mathbb{R}^3)$ satisfying \eqref{eq29}. Since
\begin{multline*}
\frac{d}{d \varepsilon } \bigg\vert_{\varepsilon =0} E_{TD} [\theta_A^\varepsilon , \theta_B^\varepsilon , \theta_S^\varepsilon ]\\
 = - \int_{\Omega_A (t)} \kappa_A \nabla \theta_A \cdot \nabla \psi_A { \ }d x - \int_{\Omega_B (t)} \kappa_B \nabla \theta_B \cdot \nabla \psi_B { \ }d x - \int_{\Gamma (t)} \kappa_S \nabla_\Gamma \theta_S \cdot \nabla_\Gamma \psi_S { \ }d \mathcal{H}^2_x,
\end{multline*}
we apply the integration by parts (Lemma \ref{lem31}), \eqref{eq29}, and \eqref{eq11} to see that
\begin{multline*}
\frac{d}{d \varepsilon } \bigg\vert_{\varepsilon =0} E_{TD} [\theta_A^\varepsilon , \theta_B^\varepsilon , \theta_S^\varepsilon ] = \int_{\Omega_A (t)} {\rm{div}} (\kappa_A \nabla \theta_A) \psi_A { \ }d x + \int_{\Omega_B (t)} {\rm{div}} (\kappa_B \nabla \theta_B)\psi_B { \ }d x\\
 + \int_{\Gamma (t)} \{ {\rm{div}}_\Gamma ( \kappa_S \nabla_\Gamma \theta_S ) + \kappa_B (n_\Gamma \cdot \nabla ) \theta_B - \kappa_A (n_\Gamma \cdot \nabla ) \theta_A  \} \psi_S { \ }d \mathcal{H}^2_x.
\end{multline*}
From fundamental lemma of calculation of variations, we have \eqref{eq2011}. Therefore, Theorem \ref{thm27} is proved.
\end{proof}

\section{Thermodynamical Modeling}\label{sect4}
In this section we make mathematical models for multiphase flow with surface tension and flow by our thermodynamic approaches. Under the restrictions \eqref{eq11}, we apply Proposition \ref{prop23}, the first law of thermodynamics (Theorem \ref{thm24}), and our energy densities (Definition \ref{def22}) to derive equations \eqref{eq12}-\eqref{eq14}. In this section we consider the case when the fluids in $\Omega_{A,T}$, $\Omega_{B,T}$, $\Gamma_T$ are barotropic fluids.

Let $r \in \{ 0, 1 \}$, and $p_A , p_B, p_B \in C^1 (\mathbb{R})$. We assume that $(v_A , v_B , v_S, \theta_A , \theta_B , \theta_S )$ satisfies \eqref{eq11}. We consider the energy densities defined in Definition \ref{def22} as the energy densities for multiphase flow with surface flow.

From Proposition \ref{prop23}, we admit that system \eqref{eq12} is the continuity equations of our system, that is, we assume that $(\rho_A , \rho_B, \rho_S)$ satisfies \eqref{eq12}. 

Let $(Q_A, Q_B, Q_S)$ be the endothermic energies derived from energies dissipation due to thermal diffusion. From Theorem \ref{thm27}, we set
\begin{equation*}
\begin{cases}
Q_A = {\rm{div}}( \kappa_A \nabla \theta_A) & \text{ in } \Omega_{A,T},\\
Q_B = {\rm{div}}( \kappa_B \nabla \theta_B) & \text{ in } \Omega_{B,T},\\
Q_S = {\rm{div}}_\Gamma ( \kappa_S \nabla_\Gamma \theta_S )  + \kappa_B ( n_\Gamma \cdot \nabla ) \theta_B - \kappa_A ( n_\Gamma \cdot \nabla ) \theta_A & \text{ on } \Gamma_T.
\end{cases}
\end{equation*}
Let $\widetilde{Q}_\sharp = \widetilde{Q}_\sharp (x,t)$ be the quantity of heat supplied the fluid in $\Omega_\sharp (t)$, where $\sharp = A, B, S$, and $\Omega_S (t) := \Gamma (t)$. Since $e_{D_\sharp}$ is the energy density for energy dissipation due to the viscosities $(\mu_\sharp , \lambda_\sharp)$, we set $\widetilde{Q}_\sharp = Q_\sharp + e_{D_\sharp}$. Now we admit the first law of thermodynamics, that is, suppose that for every $0 < t <T$ and $\Lambda \subset \Omega$,
\begin{align*}
\frac{d}{d t} \int_{\Omega_A (t) \cap \Lambda} \{ \rho_A e_A - p_A (\rho_A) \} { \ }d x = \int_{\Omega_A (t) \cap \Lambda} (Q_A + e_{D_A}) { \ } d x,\\
\frac{d}{d t} \int_{\Omega_B (t) \cap \Lambda} \{ \rho_B e_B - p_B (\rho_B) \}  { \ }d x = \int_{\Omega_B (t) \cap \Lambda} (Q_B + e_{D_B}  ){ \ } d x,\\
\frac{d}{dt} \int_{\Gamma (t) \cap \Lambda} \{ \rho_S e_S - p_S (\rho_S) \} { \ }d \mathcal{H}^2_x = \int_{\Gamma (t) \cap \Lambda} ( Q_S + e_{D_S} ) { \ } d \mathcal{H}^2_x.
\end{align*}
From Theorem \ref{thm24}, we have
\begin{equation}\label{eq41}
\begin{cases}
\rho_A D_t^A e_A + ({\rm{div}} v_A) \Pi_A = {\rm{div}} q_A + e_{D_A}  \text{ in }\Omega_{A,T} ,\\
\rho_B D_t^B e_B + ({\rm{div}} v_B ) \Pi_B  = {\rm{div}} q_B +e_{D_B}  \text{ in }\Omega_{B,T},\\
\rho_S D_t^S e_S + ({\rm{div}}_\Gamma v_S) \Pi_S = {\rm{div}}_\Gamma q_S + e_{D_S} +  q_B \cdot n_\Gamma - q_A \cdot n_\Gamma \text{ on }\Gamma_T,
\end{cases}
\end{equation}
where $(q_A,q_B , q_S )$ and $(\Pi_A , \Pi_B , \Pi_S)$ are defined by \eqref{eq15} and \eqref{eq26}.

Let $\mathcal{F}^A, \mathcal{F}^B, \mathcal{F}^S \in [ C ( \mathbb{R}^4) ]^3$. We assume that the momentum equations of our system are written by
\begin{equation}\label{eq42}
\begin{cases}
\rho_A D_t^A v_A = \mathcal{F}^A  & \text{ in }\Omega_{A,T},\\
\rho_B D_t^B v_B = \mathcal{F}^B & \text{ in }\Omega_{B , T},\\
\rho_S D_t^S v_S = \mathcal{F}^S & \text{ on }\Gamma_T.
\end{cases}
\end{equation}
From a thermodynamic point of view we assume that our system satisfies the conservation law of total energy, that is, $(\mathcal{F}^A , \mathcal{F}^B , \mathcal{F}^S)$ satisfies that for each $0<t <T$, 
\begin{multline}\label{eq43}
\frac{d}{d t} \bigg\{ \int_{\Omega_A (t)} \bigg( \frac{1}{2} \rho_A \vert v_A \vert^2 + \rho_A e_A \bigg) { \ } dx +  \int_{\Omega_B (t)} \bigg( \frac{1}{2} \rho_B \vert v_B \vert^2 + \rho_B e_B \bigg) { \ } dx\\
 + \int_{\Gamma (t)} \bigg( \frac{1}{2} \rho_S \vert v_S \vert^2 + \rho_S e_S \bigg) { \ } d \mathcal{H}^2_x \bigg\} = 0.
\end{multline}
Using the transport theorems with \eqref{eq12}, \eqref{eq41}, \eqref{eq42}, we see that
\begin{multline*}
\text{(L.H.S) of } \eqref{eq43} = \int_{\Omega_A (t)} ( \rho_A D_t^A v_A \cdot v_A + \rho_A D_t^A e_A ) { \ } dx\\
 +  \int_{\Omega_B (t)} ( \rho_B D_t^B v_B \cdot v_B + \rho_B D_t^B e_B ) { \ } dx + \int_{\Gamma (t)} ( \rho_S D_t^S v_S \cdot v_S + \rho_S D_t^S e_S ) { \ } d \mathcal{H}^2_x\\
= \int_{\Omega_A (t)} \{ \mathcal{F}^A \cdot v_A + {\rm{div}} q_A + e_{D_A} - ({\rm{div}} v_A) \Pi_A \} { \ } dx\\
+ \int_{\Omega_B (t)} \{ \mathcal{F}^B \cdot v_B + {\rm{div}} q_B + e_{D_B} - ({\rm{div}} v_B) \Pi_B \} { \ } dx\\
+ \int_{\Gamma (t)} \{ \mathcal{F}^S \cdot v_S + {\rm{div}}_\Gamma q_S + e_{D_S} - ({\rm{div}}_\Gamma v_S) \Pi_S + q_B \cdot n_\Gamma - q_A \cdot n_\Gamma \} { \ } d \mathcal{H}^2_x.
\end{multline*}
Applying the integration by parts with \eqref{eq11}, we observe that
\begin{multline*}
\text{(L.H.S) of } \eqref{eq43} = \int_{\Omega_A (t)} \{ \mathcal{F}^A - {\rm{div}} \mathcal{T}_A (v_A , \Pi_A ) \} \cdot v_A { \ } d x\\
 + \int_{\Omega_B (t)} \{ \mathcal{F}^B - {\rm{div}} \mathcal{T}_B (v_B , \Pi_B ) \} \cdot v_B { \ } d x + \int_{\Gamma (t)} \{ \mathcal{F}^S - {\rm{div}}_\Gamma \mathcal{T}_S (v_S , \Pi_S ) \} \cdot v_S { \ } d \mathcal{H}^2_x\\
+ \int_{\Gamma (t)} \widetilde{\mathcal{T}}_A (v_A, \Pi_A ) n_\Gamma \cdot v_S { \ } d \mathcal{H}^2_x - \int_{\Gamma (t)} \widetilde{\mathcal{T}}_B (v_B, \Pi_B ) n_\Gamma \cdot v_S { \ } d \mathcal{H}^2_x,
\end{multline*}
where $(\mathcal{T}_A, \mathcal{T}_B, \mathcal{T}_S)$ and $( \widetilde{\mathcal{T}}_A, \widetilde{\mathcal{T}}_B)$ are defined by \eqref{eq17} and \eqref{eq18}. Here we used the facts that
\begin{align*}
\int_{\Gamma (t)} \{ \mathcal{T}_A (v_A , \Pi_A) n_\Gamma \} \cdot v_A { \ } d \mathcal{H}^2_x = \int_{\Gamma (t)} \widetilde{\mathcal{T}}_A (v_A, \Pi_A ) n_\Gamma \cdot v_S { \ } d \mathcal{H}^2_x,\\
\int_{\Gamma (t)} \{ \mathcal{T}_B (v_B , \Pi_B) n_\Gamma \} \cdot v_B { \ } d \mathcal{H}^2_x = \int_{\Gamma (t)} \widetilde{\mathcal{T}}_B (v_B, \Pi_B ) n_\Gamma \cdot v_S { \ } d \mathcal{H}^2_x.
\end{align*}
Thus, we set
\begin{equation}\label{eq44}
\begin{cases}
\mathcal{F}^A = {\rm{div}} \mathcal{T}_A (v_A , \Pi_A ),\\
\mathcal{F}^B = {\rm{div}} \mathcal{T}_B (v_B , \Pi_B),\\
\mathcal{F}^S = {\rm{div}}_\Gamma \mathcal{T}_S (v_S , \Pi_S ) + \widetilde{\mathcal{T}}_B (v_B , \Pi_B) n_\Gamma - \widetilde{\mathcal{T}}_A (v_A, \Pi_A) n_\Gamma
\end{cases}
\end{equation}
to see that (L.H.S) of \eqref{eq43} equals to zero. Combining \eqref{eq41}, \eqref{eq42}, \eqref{eq44}, we have \eqref{eq13} and \eqref{eq14}. Therefore, we derive \eqref{eq12}-\eqref{eq14} by our thermodynamic approach.

Finally, we introduce another approach to derive the momentum equations \eqref{eq14}. We assume that the time rate of change of the momentum equals to the forces derived from the variation of energies dissipation due to the viscosities, that is, suppose that for every $0 < t <T$ and $\Lambda \subset \Omega$,
\begin{align*}
\frac{d}{d t} \int_{\Omega_A (t) \cap \Lambda} \rho_A v_A { \ }d x = \int_{\Omega_A (t) \cap \Lambda} \mathcal{F}_A { \ } d x,\\
\frac{d}{d t} \int_{\Omega_B (t) \cap \Lambda} \rho_B v_B  { \ }d x = \int_{\Omega_B (t) \cap \Lambda} \mathcal{F}_B { \ } d x,\\
\frac{d}{dt} \int_{\Gamma (t) \cap \Lambda} \rho_S v_S { \ }d \mathcal{H}^2_x = \int_{\Gamma (t) \cap \Lambda} \mathcal{F}_S { \ } d \mathcal{H}^2_x,
\end{align*}
where $(\mathcal{F}_A, \mathcal{F}_B, \mathcal{F}_S)$ is defined by \eqref{eq2010}. Using the transport theorems with \eqref{eq12}, we have \eqref{eq14}.

\section{Conservative Forms and Conservation Laws}\label{sect5}
We study the conservation laws of our model to prove Theorem \ref{thm28}.
\begin{proof}[Proof of Theorem \ref{thm28}]
Let $r \in \{0, 1\}$. Direct calculations give $(\rm{i})$ (see Remark \ref{rem13}). We now prove $(\rm{ii})$. From Proposition \ref{prop23} and the arguments in Section \ref{sect4}, we see \eqref{eq1012} and \eqref{eq1013}. Using the transport theorems (Definition \ref{def21}) with \eqref{eq12} and \eqref{eq14}, we see that
\begin{multline}\label{eq51}
\frac{d}{d t} \bigg( \int_{\Omega_A (t)} \frac{1}{2} \rho_A \vert v_A \vert^2 { \ }d x + \int_{\Omega_B (t)} \frac{1}{2} \rho_B \vert v_B \vert^2 { \ }d x + \int_{\Gamma (t)} \frac{1}{2} \rho_S \vert v_S \vert^2 { \ }d \mathcal{H}^2_x \bigg)\\
= \int_{\Omega_A (t)} \rho_A D_t^A v_A \cdot v_A { \ }d x + \int_{\Omega_B (t)} \rho_B D_t^B v_B \cdot v_B { \ }d x + \int_{\Gamma (t)} \rho_S D_t^S v_S \cdot v_S { \ }d \mathcal{H}^2_x\\
= \int_{\Omega_A (t)} {\rm{div}} \mathcal{T}_A (v_A , \pi_A ) \cdot v_A { \ }d x + \int_{\Omega_B (t)} {\rm{div}}\mathcal{T}_B (v_B , \pi_B ) \cdot v_B { \ }d x\\
 + \int_{\Gamma (t)}\{ {\rm{div}}_\Gamma \mathcal{T}_S (v_S , \pi_S)  + \widetilde{\mathcal{T}}_B(v_B , \pi_B) n_\Gamma - \widetilde{\mathcal{T}}_A (v_A , \pi_A ) n_\Gamma \} \cdot v_S { \ }d \mathcal{H}^2_x.
\end{multline}
Applying integration by parts (Lemma \ref{lem31}) and \eqref{eq11}, we check that
\begin{multline*}
\text{(R.H.S) of } \eqref{eq51} = \int_{\Omega_A (t)} \{ - e_{D_A} + ({\rm{div}} v_A ) \pi_A \} { \ }d x\\
 + \int_{\Omega_B (t)} \{ - e_{D_B} + ({\rm{div}} v_B ) \pi_B \} { \ }d x + \int_{\Gamma (t)} \{ - e_{D_S} + ({\rm{div}}_\Gamma v_S )\pi_S \} { \ }d \mathcal{H}^2_x,
\end{multline*}
where $(e_{D_A}, e_{D_B}, e_{D_S})$ is defined by \eqref{eq16}. Integrating with respect to $t$, we have \eqref{eq1014}.

Finally, we show $(\rm{iii})$. Assume that $r =1$. Using the transport and divergence theorems (Definition \ref{def21} and Lemma \ref{lem31}) with \eqref{eq2022}, we see that
\begin{multline*}
\frac{d}{d t} \bigg( \int_{\Omega_A (t)} \rho_A v_A { \ }d x + \int_{\Omega_B (t)} \rho_B v_B { \ }d x + \int_{\Gamma (t)} \rho_S v_S { \ }d \mathcal{H}^2_x \bigg)\\
= \int_{\Omega_A (t)} {\rm{div}} \mathcal{T}_A (v_A , \pi_A ) { \ }d x + \int_{\Omega_B (t)} {\rm{div}}\mathcal{T}_B (v_B , \pi_B ) { \ }d x\\
 + \int_{\Gamma (t)}\{ {\rm{div}}_\Gamma \mathcal{T}_S (v_S , \pi_S )  + \mathcal{T}_B (v_B , \pi_B) n_\Gamma - \mathcal{T}_A (v_A , \pi_A) n_\Gamma \} { \ }d \mathcal{H}^2_x = { }^t (0,0,0).
\end{multline*}
Integrating with respect to $t$, we have \eqref{eq1015}. Therefore, Theorem \ref{thm28} is proved.
\end{proof}

\section{Thermodynamic Potential}\label{sect6}
We investigate thermodynamic potential for our model to prove Theorem \ref{thm29}. 
\begin{proof}[Proof of Theorem \ref{thm29}]
We only prove the case when $\sharp = S$. Let $r \in \{0 , 1 \}$. Assume that $\rho_S$ and $\theta_S$ are positive functions. Set $h_S =e_S + \pi_S /\rho_S$, $F_S^H = e_S -\theta_S \varsigma_S$, $F_S^G = h_S - \theta_S \varsigma_S$. Assume that $e_S$ satisfies the thermodynamic identity:
\begin{equation}\label{eq61}
D_t^S e_S = \theta_S D_t^S \varsigma_S - \pi_S D_t^S \bigg( \frac{1}{\rho_S} \bigg).
\end{equation}

We first derive \eqref{eq1016}. By \eqref{eq12} and \eqref{eq13}, we see that
\begin{align*}
\rho_S D_t^S h_S & = \rho_S D_t^S e_S + \rho_S D_t^S \bigg( \frac{\pi_S}{\rho_S} \bigg)\\
& = {\rm{div}}_\Gamma q_S + e_{D_S} + q_B \cdot n_\Gamma - q_A \cdot n_\Gamma + D_t^S \pi_S.
\end{align*}
This shows that
\begin{align*}
D_t^N (\rho_S h_S) + {\rm{div}}_\Gamma ( \rho_S h_S v_S - q_S ) & = \rho_S D_t^S h_S - {\rm{div}}_\Gamma q_S\\
& = e_{D_S} + D_t^S \pi_S + q_B \cdot n_\Gamma - q_A \cdot n_\Gamma ,
\end{align*}
which is \eqref{eq1016}.

Next we show \eqref{eq1017}. From \eqref{eq13} and \eqref{eq61}, we find that
\begin{align*}
\theta_S \rho_S D_t^S \varsigma_S  & = \rho_S D_t^S e_S + \rho_S \pi_S D_t^S \bigg( \frac{1}{\rho_S} \bigg)\\& =  {\rm{div}}_\Gamma q_S + e_{D_S} + q_B \cdot n_\Gamma - q_A \cdot n_\Gamma.
\end{align*}
Using the above equality and \eqref{eq1016}, we check that
\begin{align*}
D_t^N (\rho_S \varsigma_S) + {\rm{div}}_\Gamma \bigg( \rho_S \varsigma_S v_S - \frac{q_S}{\theta_S} \bigg) & = \rho_S D_t^S h_S - {\rm{div}}_\Gamma \bigg( \frac{q_S}{\theta_S} \bigg)\\
& = \frac{e_{D_S}}{\theta_S} + \frac{q_S \cdot {\rm{grad}}_\Gamma \theta_S }{\theta_S^2} + \frac{ q_B \cdot n_\Gamma - q_A \cdot n_\Gamma}{\theta_S}.
\end{align*}
Thus, we have \eqref{eq1017}.

Finally, we derive \eqref{eq1018} and \eqref{eq1019}. Applying \eqref{eq61} and \eqref{eq12}, we see that
\begin{align*}
\rho_S D_t^S F_S^H + \rho_S \varsigma_S D_t^S \theta_S & = \rho_S D_t^S e_S - \rho_S \theta_S D_t^S \varsigma_S\\
& = - ({\rm{div}}_\Gamma v_S) \pi_S,
\end{align*}
and that
\begin{align*}
\rho_S D_t^S F_S^G + \rho_S \varsigma_S D_t^S \theta_S & = \rho_S D_t^S h_S - \rho_S \theta_S D_t^S \varsigma_S\\
& = D_t^S \pi_S.
\end{align*}
Therefore, Theorem \ref{thm29} is proved.
\end{proof}

\begin{flushleft}
{\bf{Data Availability :}} The author declares that data sharing not applicable to this article as no datasets were generated or analyzed during the current study.
\end{flushleft}

\begin{flushleft}
{\bf{Conflict of interest :}} The author declares no conflict of interest associated with this manuscript.
\end{flushleft}

\begin{flushleft}
{\bf{Acknowledgments :}} This work was partly supported by the Japan Society for the Promotion of Science (JSPS) KAKENHI Grant Number JP21K03326.
\end{flushleft}


\begin{thebibliography}{99}

\bibitem{Ang06}Fierros Palacios Angel, \emph{The Hamilton-type principle in fluid dynamics. Fundamentals and applications to magnetohydrodynamics, thermodynamics, and astrophysics}. SpringerWienNewYork, Vienna, 2006. xxvi+404 pp. ISBN:978-3-211-24964-2; 3-211-24964-8 MR2286737

\bibitem{AC12}Marc Arnaudon and Ana Bela Cruzeiro, \emph{Lagrangian Navier-Stokes diffusions on manifolds: variational principle and stability}. Bull. Sci. Math. 136 (2012), no. 8, 857--881. MR2995006


\bibitem{Bet86}David E. Betounes, \emph{Kinematics of submanifolds and the mean curvature normal}. Arch. Rational Mech. Anal. 96  (1986), no. 1, 1--27. MR0853973

\bibitem{BP10}Dieter Bothe and Jan Pr\"{u}ss, \emph{On the two-phase Navier-Stokes equations with Boussinesq-Scriven surface fluid}. J. Math. Fluid Mech. 12 (2010), no. 1, 133--150. MR2602917. 


\bibitem{Bou13}M. J. Boussinesq, \emph{Sur l'existence d'une viscosit\'{e} seperficielle, dans la mince couche de transition s\'{e}parant un liquide d'un autre fluide contigu}, Ann. Chim. Phys. 29 (1913), 349--357.


\bibitem{DE07}Gerhard Dziuk and Charles M. Elliott, \emph{Finite elements on evolving surfaces}. IMA J. Numer. Anal. 27 (2007), no. 2, 262--292. MR2317005. 


\bibitem{Fei15}Eduard Feireisl, \emph{Mathematical thermodynamics of viscous fluids}. Mathematical thermodynamics of complex fluids, 47--100, Lecture Notes in Math., 2200, Fond. CIME/CIME Found. Subser., Springer, Cham, 2017. MR3729354


\bibitem{GP01}Ren\'{e}e Gatignol and Roger Prud'homme, \emph{Mechanical and thermodynamical modeling of fluid interfaces}. World Scientific, Singapore, 2001. xviii,+248 pp. ISBN=9810243057.


\bibitem{Gib1906}J. Willard Gibbs, \emph{The scientific papers of J. Willard Gibbs. Vol. I: Thermodynamics}. Dover Publications, Inc., New York 1961/1906 {\rm xxvi}+434 pp. MR0128829 

\bibitem{GSW89}Morton E. Gurtin, Allan Struthers, and William O. Williams, \emph{A transport theorem for moving interfaces}. Quart. Appl. Math.  47  (1989), no. 4, 773--777. MR1031691

\bibitem{GFA10}Morton E. Gurtin, Elliot Fried, and Lallit Anand, \emph{The mechanics and thermodynamics of continua}. Cambridge University Press, Cambridge, 2010. xxii+694 pp. ISBN: 978-0-521-40598-0 MR2884384

\bibitem{Gya70}Istv\'{a}n Gyarmati. \emph{Non-equilibrium Thermodynamics}. Springer, 1970. ISBN:978-3-642-51067-0

\bibitem{HKL10}Yunkyong Hyon, Do Y. Kwak, and Chun Liu, \emph{Energetic variational approach in complex fluids: maximum dissipation principle}. Discrete Contin. Dyn. Syst. 26 (2010), no. 4, 1291--1304. MR2600746

\bibitem{K18}Hajime Koba, \emph{On Derivation of Compressible Fluid Systems on an Evolving Surface}, Quart. Appl. Math. 76  (2018),  no. 2, 303--359.

\bibitem{K22}Hajime Koba, \emph{On Generalized Compressible Fluid Systems on an Evolving Surface with a Boundary}, preprint. arXiv:1810.07909. to appear in Quart. Appl. Math.

\bibitem{K19}Hajime Koba, \emph{On Generalized Diffusion and Heat Systems on an Evolving Surface with a Boundary}, Quart. Appl. Math. 78 (2020), 617-640

\bibitem{K23}Hajime Koba, \emph{Energetic Variational Approaches for inviscid multiphase flow systems with surface flow and tension}, preprint. arXiv:2211.06672


\bibitem{KLG17}Hajime Koba, Chun Liu, and Yoshikazu Giga \emph{Energetic variational approaches for incompressible fluid systems on an evolving surface}, Quart. Appl. Math. 75 (2017), no 2, 359--389. MR3614501. \emph{Errata to Energetic variational approaches for incompressible fluid systems on an evolving surface}. Quart. Appl. Math. 76 (2018), no 1, 147--152.

\bibitem{KS17}Hajime Koba and Kazuki Sato, \emph{Energetic variational approaches for non-Newtonian fluid systems}. Z. Angew. Math. Phys (2018) 69: 143. https://doi.org/10.1007/s00033-018-1039-1. 

\bibitem{MY02}Yoshihiko Mitsumatsu and Yasuhisa Yano, \emph{Geometry of an incompressible fluid on a Riemannian manifold}. (Japanese) Geometric mechanics (Japanese) (Kyoto, 2002). S$\overline{u}$rikaisekikenky$\overline{u}$sho K$\overline{o}$ky$\overline{u}$roku No. 1260 (2002), 33--47.

\bibitem{PS16}Jan Pr\"{u}ss and Gieri Simonett, \emph{Moving interfaces and quasilinear parabolic evolution equations}. Monographs in Mathematics, 105. Birkh\"{a}user/Springer, [Cham], 2016. xix+609 pp. ISBN: 978-3-319-27697-7; 978-3-319-27698-4 MR3524106

\bibitem{Scr60}L.E. Scriven, \emph{Dynamics of a fluid interface Equation of motion for Newtonian surface fluids}. Chem. Eng. Sci. 12 (1960), 98--108.

\bibitem{Sla64}John C. Slattery, \emph{Momentum and moment-of-momentum balances for moving surfaces}Chemical Engineering Science, Volume 19, 1964, Pages 379--385.

\bibitem{SSO07}John C. Slattery,  Leonard. Sagis, and Eun-Suok Oh, \emph{Interfacial transport phenomena. Second edition}. Springer, New York, 2007. xviii+827 pp. ISBN: 978-0-387-38438-2; 0-387-38438-3 MR2284654.

\bibitem{Sim83}Leon Simon, \emph{Lectures on geometric measure theory}. Proceedings of the Centre for Mathematical Analysis, Australian National University, 3. Australian National University, Centre for Mathematical Analysis, Canberra, 1983. vii+272 pp. ISBN: 0-86784-429-9 MR0756417.

\bibitem{Tay92}Michael E. Taylor, \emph{Analysis on Morrey spaces and applications to Navier-Stokes and other evolution equations}. Comm. Partial Differential Equations 17 (1992), no. 9-10, 1407--1456. MR1187618. 






\end{thebibliography}
\end{document}